\newtheorem{theorem}{Theorem}
\newtheorem{corollary}{Corollary}
\title{On the Capacity of the Dither-Quantized Gaussian Channel}
\author{Tobias Koch}
\date{}
\begin{document}

\maketitle

\begin{abstract}
This paper studies the capacity of the peak-and-average-power-limited Gaussian channel when its output is quantized using a dithered, infinite-level, uniform quantizer of step size $\Delta$.  It is shown that the capacity of this channel tends to that of the unquantized Gaussian channel when $\Delta$ tends to zero, and it tends to zero when $\Delta$ tends to infinity. In the low signal-to-noise ratio (SNR) regime, it is shown that, when the peak-power constraint is absent, the low-SNR asymptotic capacity is equal to that of the unquantized channel irrespective of $\Delta$. Furthermore, an expression for the low-SNR asymptotic capacity for finite peak-to-average-power ratios is given and evaluated in the low- and high-resolution limit. It is demonstrated that, in this case, the low-SNR asymptotic capacity converges to that of the unquantized channel when $\Delta$ tends to zero, and it tends to zero when $\Delta$ tends to infinity. Comparing these results with achievability results for (undithered) 1-bit quantization, it is observed that the dither reduces capacity in the low-precision limit, and it reduces the low-SNR asymptotic capacity unless the peak-to-average-power ratio is unbounded.
\renewcommand{\thefootnote}{}
\footnote{This research was supported by a Marie Curie FP7 Integration Grant within the 7th European Union Framework Programme under Grant 333680 and by the Spanish Government (TEC2009-14504-C02-01, CSD2008-00010, and TEC2012-38800-C03-01).

T.~Koch is with the Signal Theory and Communications Department, Universidad Carlos III de Madrid, Legan\'es 28911, Spain (e-mail: \texttt{koch@tsc.uc3m.es}).}
\end{abstract}
\setcounter{footnote}{0}

\section{Introduction}
\label{sec:intro}
We study the capacity of the discrete-time, peak-and-average-power-limited, Gaussian channel when its output is quantized using a dithered, infinite-level, uniform quantizer of step size $\Delta$ and analyze its behavior in the low- and high-precision limit, where $\Delta$ tends to infinity and zero, respectively.

The problem of quantization arises in communication systems where the receiver uses digital signal processing techniques, so the analog received signal must be sampled and then quantized using an analog-to-digital converter (ADC). If the received signal is sampled at Nyquist rate or above, and if an ADC with high precision is employed, then the effects of sampling and quantization are negligible. However, high-precision ADCs may not be practical when the bandwidth of the system is large and the sampling-rate is high \cite{walden99}. In such scenarios, low-resolution ADCs must be used.

To better understand what communication rates can be achieved with low-resolution ADCs and Nyquist sampling, various works have studied the discrete-time Gaussian channel when its output is quantized using a 1-bit quantizer. At low signal-to-noise ratio (SNR), where communication at low spectral efficiencies takes place, it is known that a symmetric threshold quantizer\footnote{A threshold quantizer produces $1$ if its input is above a threshold, and it produces $0$ if its not. A \emph{symmetric} threshold quantizer is a threshold quantizer whose threshold is zero.} reduces capacity by a factor of $2/\pi$, corresponding to a 2 dB power loss \cite{viterbiomura79}, \cite{singhdabeermadhow09_2}. Hence the rule of thumb that ``hard decisions cause a 2 dB power loss." It was recently demonstrated that this power loss can be avoided by using asymmetric threshold quantizers and asymmetric signal constellations \cite{kochlapidoth13}. However, this result requires \emph{flash-signaling} input distributions \cite[Th.~3]{kochlapidoth13} (see \cite[Def.~2]{verdu02} for a definition). Since such inputs are known to have a poor spectral efficiency \cite[Th.~16]{verdu02}, it follows that for small yet positive spectral efficiencies, the potential  power gain is significantly smaller than 2 dB. For example, at spectral efficiencies of 0.001 bits/s/Hz, allowing for asymmetric quantizers with corresponding asymmetric signal constellations provides a power gain of merely 0.1 dB \cite[Sec.~V]{kochlapidoth13}.

In the following, we refer the Gaussian channel with ($K$-bit) output quantization as the \emph{($K$-bit) quantized Gaussian channel} and to the Gaussian channel without output quantization simply as the \emph{Gaussian channel}. For the Gaussian channel, binary antipodal inputs outperform flash-signaling inputs in terms of spectral efficiency \cite[Th.~11]{verdu02}. However, for such inputs quantizing the channel output with a 1-bit quantizers incurs again a 2 dB power loss at low SNR, since in this case a symmetric threshold quantizer becomes asymptotically optimal as the SNR tends to zero \cite[Prop.~2]{kochlapidoth13}.

Recalling that the discrete-time Gaussian channel arises from the continuous-time, bandlimited, additive white Gaussian noise (AWGN) channel by sampling the output at Nyquist rate, it can be shown that, for binary antipodal signaling and a symmetric threshold quantizer, the 2 dB power loss can be reduced by sampling the channel output above the Nyquist rate. For instance, it was demonstrated that, at low SNR, sampling the output at twice the Nyquist rate improves the power loss from 2 dB for Nyquist sampling to less than 1.28 dB \cite[Th.~1]{kochlapidoth10_2}, \cite[Th.~1]{kochlapidoth10_1_arxiv}. Further results on the capacity of the 1-bit quantized Gaussian channel and super-Nyquist sampling include \cite{zhang12}\nocite{gilbert93}--\cite{shamai94}. Specifically, Zhang \cite{zhang12} studies the generalized mutual information of this channels for a Gaussian codebook ensemble and the nearest-neighbor decoding rule and demonstrates \emph{inter alia} that, as the sampling rate tends to infinity, the power loss is not larger than 0.98 dB. Shamai \cite{shamai94} considers the noiseless case and demonstrates that the capacity is unbounded in the sampling rate. However, it is unknown whether for a symmetric threshold quantizer the power loss can be fully avoided by letting the sampling rate tend to infinity.

Going beyond 1-bit quantizers, it was shown that, at low SNR, a uniform 3-bit quantizer and binary antipodal signaling achieves about 95\% of the capacity of the Gaussian channel, corresponding to a power loss of merely 0.223 dB \cite[Eq.~(3.4.21)]{viterbiomura79}. The capacity of the $K$-bit quantized Gaussian channel was studied, e.g., in \cite{singhdabeermadhow09_2}. The numerical results obtained in \cite{singhdabeermadhow09_2} suggest that, at 0 dB SNR, a 2-bit quantizer achieves still 95\% of the capacity of the Gaussian channel, while at 20 dB SNR, a 3-bit quantizer achieves still 85\% of the capacity of Gaussian channel. However, to the best of our knowledge, there exists no closed-form expression for the capacity of the $K$-bit quantized Gaussian channel, except for the binary case where the channel output is quantized using a symmetric threshold quantizer \cite[Th.~2]{singhdabeermadhow09_2}.

A ubiquitous quantizer is the \emph{uniform quantizer}, whose levels are equispaced, say $\Delta$ apart, either with an infinite or a finite number of levels. We refer to \cite{grayneuhoff98} for a comprehensive survey of quantization theory. For finite-level uniform quantizers, the outermost cells will be semi-infinite and the input space corresponding to these cells is referred to as the \emph{overload region} \cite{grayneuhoff98}. While infinite-level uniform quantizers need an infinite number of bits to describe their output and seem therefore impractical, they have the advantage of eliminating the overload region and resulting overload distortion \cite[Sec.~II-C]{grayneuhoff98}. For this reason, infinite-level uniform quantizers are typically preferred in theoretical analyses, in the hope that the tail of the source to be quantized decays sufficiently fast so the overload distortion be negligible. By Shannon's source coding theorem \cite{shannon48}, irrespective of the number of levels, the output of a uniform quantizer can be described by a variable-length code whose expected length is roughly the entropy of the quantizer output. Consequently, the rate of a quantizer is often measured by the entropy of its output.

The step size $\Delta$ of the uniform quantizer determines its precision:  the smaller $\Delta$, the higher the precision. The high-precision limit (where $\Delta\downarrow 0$) was studied by Gish and Pierce \cite{gishpierce68}, who showed that the difference between the entropy of the output of an infinite-level uniform quantizer and the rate distortion function converges to $\frac{1}{2}\log\frac{\pi e}{6}$ as the permitted distortion (and hence also $\Delta$) vanishes. As for the low-precision limit (where $\Delta\to\infty$), it was shown that for exponential, Laplacian, and Gaussian sources the entropy of the quantizer output approaches zero with the same slope as the rate-distortion function as the allowed distortion tends to the source variance, whereas for uniform sources the slope of the entropy of the quantized output becomes infinite, in contrast to the rate-distortion function which has a finite slope \cite{sullivan96}--\nocite{marconeuhoff06}\cite{gyorgylinder00}. To prove their result for Gaussian sources \cite{marconeuhoff06}, Marco and Neuhoff showed that, in the low-precision limit, the entropy of the quantizer output is determined by the probabilities corresponding to the innermost cells, which is in agreement with the intuition that if the tail of the source decays sufficiently fast, then the overload distortion can be neglected \cite[Lemma 3]{marconeuhoff06}.

A common strategy to further simplify the theoretical analysis of uniform quantizers is \emph{dithering}. (We refer again to \cite[Sec.~V-E]{grayneuhoff98} for a survey of this topic.) In a dithered quantizer, instead of quantizing an input signal directly, one quantizes the sum of the signal and a random process (called a \emph{dither}) that is independent of the signal. This allows one to describe the quantization noise by additive uniform noise that is independent of the input signal. Specifically, if the dither is uniformly distributed over $[-\Delta/2,\Delta/2]$, then the conditional entropy of the quantizer output given the dither is equal to the mutual information between the quantizer input and the sum of the input and independent, uniformly distributed noise \cite[Th.~1]{zamirfeder92}. Dithered quantization was studied in numerous works. Of particular interest to us is the work by Zamir and Feder \cite{zamirfeder95}, which studied the rate-distortion behavior when a bandlimited stationary source is first sampled at Nyquist rate or faster, then it undergoes dithered uniform quantization, and finally it is entropy-encoded. Generalizations of dithered quantization can be found, e.g., in \cite{liklejsakleijn10}, \cite{saldilinderyuksel13}.

Observe that analyses of the capacity of the quantized Gaussian channel are motivated by the need for low-resolution quantizers and therefore typically consider quantizers with a small number of levels. However, the analysis of such quantizers becomes intractable as quantizer resolution and/or sampling rate increase. 
In contrast, theoretical work on quantization often considers infinite-level uniform quantizers, since they allow for a simplified analysis. In this paper, we bring together these two lines of research by studying the capacity of the Gaussian channel when its output is quantized using a dithered, infinite-level, uniform quantizer of step size $\Delta$. (We shall refer to this channel as the \emph{dither-quantized Gaussian channel}.) Since a dithered quantizer can be described as an additive noise channel with uniform noise, the dither-quantized Gaussian channel is equivalent to an additive noise channel where the noise is the sum of a Gaussian and a uniform random variable. This simplifies the analysis of its capacity. While beyond the scope of this paper, we hope that, in the long term, studying the capacity of the dither-quantized Gaussian channel will help us better understand the tradeoff in channel capacity between sampling rate and quantization resolution of the continuous-time, bandlimited, AWGN channel.

The rest of this paper is organized as follows. Section~\ref{sec:channel} introduces the channel model and defines the capacity as well as the low-SNR asymptotic capacity. Section~\ref{sec:capacity} presents the results (as well as the proofs thereof) that concern channel capacity. Section~\ref{sec:lowSNR} presents the results (as well as the proofs thereof) that concern the low-SNR asymptotic capacity. Section~\ref{sec:conclusion} concludes the paper with a summary and a discussion of our results.

\section{Channel Model and Capacity}
\label{sec:channel}
\begin{figure}
\begin{center}
\includegraphics[width=0.8\textwidth]{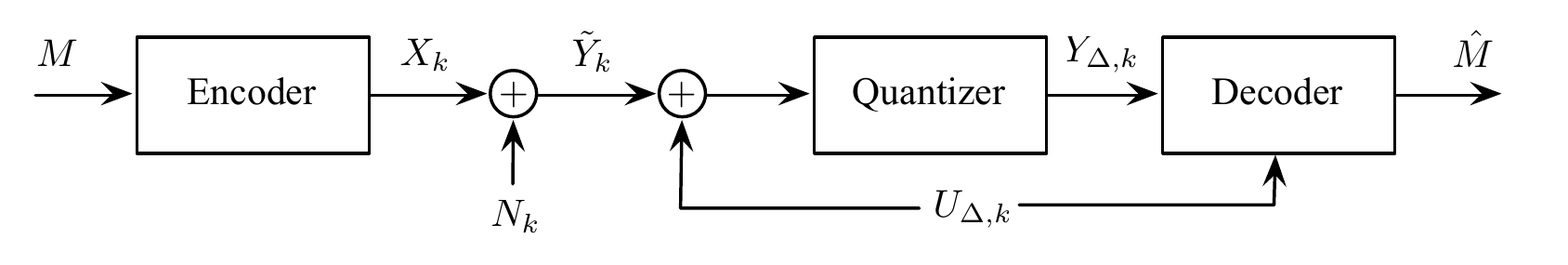}
\end{center}
\label{fig:system}
\caption{System model.}
\end{figure}

We consider the discrete-time communication system depicted in Figure~1. A message $M$, which is uniformly distributed over the set $\{1,\ldots,\mathsf{M}\}$, is mapped by an encoder to the length-$n$ real sequence $X_1,\ldots,X_n\in\Reals$ of channel inputs. (Here, $\Reals$ denotes the set of real numbers.) The channel corrupts this sequence by adding Gaussian noise to produce the unquantized output sequence
\begin{equation}
\label{eq:channel1}
\tilde{Y}_k = X_k + N_k, \quad k\in\Integers
\end{equation}
where $\{N_k,\,k\in\Integers\}$ is a sequence of independent and identically distributed (i.i.d.) Gaussian random variables of mean zero and variance $\sigma^2$. (Here, $\Integers$ denotes the set of integers.) The unquantized sequence is then quantized using a dithered, infinite-level, uniform quantizer of step size $\Delta$. Specifically, the quantizer is a function $q_{\Delta}\colon \Reals \to \Integers$ that produces $i$ if $x\in[i\Delta,(i+1)\Delta)$, i.e.,
\begin{equation}
\label{eq:quantizer}
q_{\Delta}(x) = \left\lfloor\frac{x}{\Delta}\right\rfloor, \quad x\in\Reals
\end{equation}
where, for every $a\in\Reals$, $\lfloor a \rfloor$ denotes the largest integer not larger than $a$.\footnote{In the quantization literature, it is common to consider quantizers whose reproduction values are in the center of their cells, i.e., $q_{\Delta}(x)=\left\lfloor\frac{x}{\Delta}\right\rfloor+\Delta/2$, $x\in\Reals$, since this choice minimizes the expected squared error. For ease of exposition, we use the slightly simpler definition \eqref{eq:quantizer}. In any case, the actual reproduction values do not affect the achievable information rates.} The quantizer output $Y_{\Delta,k}$ is given by
\begin{equation}
\label{eq:channel2}
Y_{\Delta,k} = q_{\Delta}\bigl(\tilde{Y}_k+U_{\Delta,k}\bigr), \quad k\in\Integers
\end{equation}
where $\{U_{\Delta,k},\,k\in\Integers\}$ is a sequence of i.i.d.\ random variables that are uniformly distributed over the interval $[-\Delta/2,\Delta/2]$, referred to as \emph{dither}. We assume that channel input, additive Gaussian noise, and dither are independent. The decoder observes the quantizer output $Y_{\Delta,1},\ldots,Y_{\Delta,n}$ as well as the dither $U_{\Delta,1},\ldots,U_{\Delta,n}$ and guesses which message was transmitted.

We impose both an average-power and a peak-power constraint on the channel inputs: for every realization of $M$, the sequence $x_1,\ldots,x_n$ must satisfy
\begin{equation}
\frac{1}{n} \sum_{k=1}^n x_k^2 \leq \mathsf{P} \quad \text{and} \quad |x_k|^2 \leq \mathsf{A}^2.
\end{equation}
The capacity of the dither-quantized Gaussian channel \eqref{eq:channel1}--\eqref{eq:channel2} under the power constraints $\mathsf{P}$ and $\mathsf{A}^2$ on the channel inputs is given by \cite[Sec.~7.3]{gallager68}
\begin{equation}
\label{eq:capacity}
C_{\Delta}(\mathsf{P},\mathsf{A}) = \sup I(X;Y_{\Delta}|U_{\Delta})
\end{equation}
where the supremum is over all distributions of $X$ satisfying $\E{X^2}\leq\mathsf{P}$ and $|X|\leq\mathsf{A}$ with probability one.\footnote{To account for the dither, we use the standard approach of treating it as an additional channel output that is independent of the channel input.} Here and throughput the paper, we omit the time indices where they are immaterial. When the peak-power constraint is relaxed ($\mathsf{A}=\infty$), we shall denote the capacity by $C_{\Delta}(\mathsf{P})$. In an analogous manner, we shall denote the capacity of the Gaussian channel under the power constraints $\mathsf{P}$ and $\mathsf{A}$ by $C(\mathsf{P},\mathsf{A})$, i.e.,
\begin{equation}
C(\mathsf{P},\mathsf{A}) = \sup I(X;X+N)
\end{equation}
where the supremum is over all distributions of $X$ satisfying $\E{X^2}\leq\mathsf{P}$ and $|X|\leq\mathsf{A}$ with probability one. We shall omit the second argument when the peak-power constraint is relaxed, i.e., $C(\mathsf{P})=C(\mathsf{P},\infty)$. By the data processing inequality \cite[Th.~2.8.1]{coverthomas91},
\begin{equation}
\label{eq:lowres_UB}
C_{\Delta}(\mathsf{P},\mathsf{A}) \leq C(\mathsf{P},\mathsf{A}).
\end{equation}
While it is well-known that the input distribution achieving $C(\mathsf{P},\mathsf{A})$ is discrete \cite{smith71}, to the best of our knowledge, there exists no closed-form expression for $C(\mathsf{P},\mathsf{A})$. Nevertheless, by relaxing the peak-power constraint, we obtain for every $\mathsf{P}$ and $\mathsf{A}$ \cite{shannon48}
\begin{equation}
\label{eq:DPI}
C(\mathsf{P},\mathsf{A}) \leq C(\mathsf{P}) = \frac{1}{2}\log\left(1+\frac{\mathsf{P}}{\sigma^2}\right).
\end{equation}
Here and throughout this paper, $\log(\cdot)$ denotes the natural logarithm function. (Consequently, all rates are in nats per channel use.) In Section~\ref{sub:main_capacity}, we demonstrate that the inequality in \eqref{eq:lowres_UB} becomes tight as $\Delta\downarrow 0$ and that $C_{\Delta}(\mathsf{P},\mathsf{A})$ tends to zero as $\Delta\to\infty$.

Since a dithered quantizer can be described as an additive noise channel with uniform noise $U_{\Delta}$, the dither-quantized Gaussian channel is equivalent to an additive noise channel with noise $Z_{\Delta}=N+U_{\Delta}$. Indeed, following the proof of Theorem~1 in \cite{zamirfeder92}, we show in Appendix~\ref{app:dither_add_noise} that the mutual information on the right-hand side (RHS) of \eqref{eq:capacity} is equal to
\begin{equation}
\label{eq:dither_add_noise}
I(X;Y_{\Delta}|U_{\Delta}) = I(X;X+Z_{\Delta})
\end{equation}
where the probability density function (pdf) of the additive noise $Z_{\Delta}$ is the convolution of the Gaussian and the uniform pdf:
\begin{equation}
\label{eq:pdf_z}
f_{Z_{\Delta}}(z) = \frac{1}{\Delta} \left[Q\left(\frac{z-\Delta/2}{\sigma}\right)-Q\left(\frac{z+\Delta/2}{\sigma}\right)\right].
\end{equation}
Here $Q(\cdot)$ denotes the \emph{Gaussian probability integral} ($Q$-function) \cite[Eq.~(1.3)]{simon02}. 

In addition to capacity, we also study the slope of the capacity-vs-power curve at zero when either the peak-power constraint is relaxed ($\mathsf{A}=\infty$) or when the peak-to-average-power ratio $\mathsf{K}\triangleq\mathsf{A}^2/\mathsf{P}$ is finite and held fixed, i.e.,
\begin{equation}
\label{eq:CUE_infty}
\dot{C}_{\Delta}^{(\infty)}(0) \triangleq \lim_{\mathsf{P}\downarrow 0} \frac{C_{\Delta}(\mathsf{P})}{\mathsf{P}}
\end{equation}
and
\begin{equation}
\label{eq:CUE_K}
\dot{C}_{\Delta}^{(\mathsf{K})}(0) \triangleq \lim_{\mathsf{P}\downarrow 0} \frac{C_{\Delta}(\mathsf{P},\sqrt{\mathsf{K}\mathsf{P}})}{\mathsf{P}}.
\end{equation}
We shall refer to the slope of the capacity-vs-power curve at zero as the \emph{low-SNR asymptotic capacity}.

Relaxing the peak-power constraint allows for a simple expression for $\dot{C}_{\Delta}^{(\infty)}(0)$ \cite[Th.~3]{verdu90}:
\begin{equation}
\label{eq:CUE_verdu}
\dot{C}_{\Delta}^{(\infty)}(0) = \sup_{x\neq 0} \frac{D(P_{X+Z_{\Delta}|X=x}\|P_{X+Z_{\Delta}|X=0})}{x^2}
\end{equation}
where $D(\cdot\|\cdot)$ denotes relative entropy and $P_{X+Z_{\Delta}|X=x}$ denotes the conditional distribution of $X+Z_{\Delta}$ given $X=x$. Unfortunately, $\dot{C}_{\Delta}^{(\infty)}(0)$ may characterize $C_{\Delta}(\mathsf{P})$ only at impractically small input powers $\mathsf{P}$. Indeed, if the supremum on the RHS of \eqref{eq:CUE_verdu} is approached only as $|x|\to\infty$ (as is the case for the 1-bit quantized Gaussian channel \cite[Th.~3]{kochlapidoth13}), then the input distribution that achieves the first derivative of $C_{\Delta}(\mathsf{P})$ at zero (i.e., $\dot{C}_{\Delta}^{(\infty)}(0)$) must be flash signaling, which implies that the second derivative of $C_{\Delta}(\mathsf{P})$ at zero is $-\infty$ \cite{verdu02}. Consequently, in such cases,  $\dot{C}_{\Delta}^{(\infty)}(0)$ describes the behavior of $C_{\Delta}(\mathsf{P})$ poorly, unless $\mathsf{P}$ is very small.

To address this problem, we consider also the case where the peak-to-average-power ratio $\mathsf{K}$ is finite and held fixed, thereby precluding the use of flash signaling input distributions. In this case, it was demonstrated that if the channel law satisfies a number of technical conditions, then the low-SNR asymptotic capacity is given by \cite{ibragimovkhasminskii72}, \cite{prelovvandermeulen93}
\begin{equation}
\label{eq:CUE_prelov}
\dot{C}_{\Delta}^{(\mathsf{K})}(0) = \frac{1}{2} I(0)
\end{equation}
where $I(x)$ denotes the Fisher information
\begin{equation}
\label{eq:FI}
I(x) \triangleq \int_{-\infty}^{\infty} \frac{[\frac{\partial}{\partial x} f_{Z_\Delta}(y-x)]^2}{f_{Z_{\Delta}}(y-x)} \d y.
\end{equation}

By \eqref{eq:lowres_UB} and \eqref{eq:DPI}, and by noting that relaxing the peak-power constraint does not reduce capacity, it follows that
\begin{equation}
\label{eq:ordering}
\dot{C}_{\Delta}^{(\mathsf{K})}(0) \leq \dot{C}_{\Delta}^{(\infty)}(0) \leq \frac{1}{2\sigma^2}.
\end{equation}
In Section~\ref{sub:main_lowSNR}, we demonstrate that the right-most inequality holds with equality irrespective of $\Delta$, while the left-most inequality holds with equality if, and only if, $\Delta$ vanishes.

\section{Channel Capacity}
\label{sec:capacity}
In this section, we study the capacity for arbitrary input powers $\mathsf{P}$ in the high- and low-resolution limit, i.e., when $\Delta\downarrow 0$ and $\Delta\to\infty$, respectively. We show that in the former case, the capacity $C_{\Delta}(\mathsf{P},\mathsf{A})$ converges to that of the Gaussian channel, and in the latter case, it converges to zero.

\subsection{Main Results}
\label{sub:main_capacity}
\begin{theorem}
\label{thm:cap_highres}
Consider the dither-quantized Gaussian channel described in Section~\ref{sec:channel}. Then, for any distribution on $X$ satisfying $\E{X^2}\leq\mathsf{P}$,
\begin{equation}
\label{eq:thm_cap_highres}
\lim_{\Delta\downarrow 0} I(X;X+Z_{\Delta}) = I(X;X+N).
\end{equation}
\end{theorem}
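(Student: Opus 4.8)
The plan is to establish the two one-sided inequalities $\limsup_{\Delta\downarrow 0} I(X;X+Z_{\Delta})\le I(X;X+N)$ and $\liminf_{\Delta\downarrow 0} I(X;X+Z_{\Delta})\ge I(X;X+N)$ separately, and then combine them.

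For the upper bound I would exploit the decomposition $Z_{\Delta}=N+U_{\Delta}$, recalling that $X$, $N$, and $U_{\Delta}$ are mutually independent. Then $X+Z_{\Delta}=(X+N)+U_{\Delta}$, and since $U_{\Delta}$ is independent of the pair $(X,N)$, the random variables $X\to X+N\to X+Z_{\Delta}$ form a Markov chain. The data-processing inequality then gives $I(X;X+Z_{\Delta})\le I(X;X+N)$ for \emph{every} $\Delta>0$ --- this is \eqref{eq:lowres_UB} specialized to the present notation, and it uses neither the power constraint nor the limit $\Delta\downarrow 0$. Letting $\Delta\downarrow 0$ yields the first inequality.

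For the lower bound I would invoke lower semicontinuity of mutual information under weak convergence. Coupling the dither as $U_{\Delta}=\Delta V$ with $V$ uniform on $[-1/2,1/2]$ and independent of $(X,N)$, we have $U_{\Delta}\to 0$ almost surely as $\Delta\downarrow 0$, hence $(X,X+Z_{\Delta})=(X,X+N+U_{\Delta})\to(X,X+N)$ almost surely; thus the joint law of $(X,X+Z_{\Delta})$ converges weakly to that of $(X,X+N)$, and likewise for the associated product-of-marginals laws. Viewing $I(X;X+Z_{\Delta})$ as the relative entropy between the joint law and the product of its marginals, lower semicontinuity of relative entropy under weak convergence of both arguments then gives $\liminf_{\Delta\downarrow 0} I(X;X+Z_{\Delta})\ge I(X;X+N)$. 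Concretely, this can be made self-contained via the representation of $I(X;X+Z_{\Delta})$ as the supremum, over finite interval partitions $\mathcal{P}$ of the range of $X$ (with cell boundaries chosen outside the at most countably many atoms of $P_X$) and $\mathcal{Q}$ of the range of $X+Z_{\Delta}$, of $I\bigl([X]_{\mathcal{P}};[X+Z_{\Delta}]_{\mathcal{Q}}\bigr)$, where $[\,\cdot\,]_{\mathcal{P}}$ denotes the cell index: for each fixed pair $(\mathcal{P},\mathcal{Q})$ the relevant cell probabilities converge as $\Delta\downarrow 0$ --- here one uses that $X+N$ has a density, so that no partition boundary lies in a set of positive probability under the limiting law --- whence $\liminf_{\Delta\downarrow 0} I\bigl([X]_{\mathcal{P}};[X+Z_{\Delta}]_{\mathcal{Q}}\bigr)\ge I\bigl([X]_{\mathcal{P}};[X+N]_{\mathcal{Q}}\bigr)$, and taking the supremum over $(\mathcal{P},\mathcal{Q})$ finishes the argument.

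Combining the two inequalities gives \eqref{eq:thm_cap_highres}. I expect the only genuinely delicate point to be the lower bound; the partition characterization keeps it routine, the absolute continuity of the $X+N$-marginal being what prevents partition boundaries from being discontinuity points of the limiting distribution. A more computational alternative would decompose $I(X;X+Z_{\Delta})=h(X+N+U_{\Delta})-h(N+U_{\Delta})$: the term $h(N+U_{\Delta})\to h(N)$ follows by squeezing between $h(N)$ and $\frac{1}{2}\log\bigl(2\pi e(\sigma^{2}+\Delta^{2}/12)\bigr)$, but the convergence $h(X+N+U_{\Delta})\to h(X+N)$ would require a uniform-integrability argument for $f_{X+N+U_{\Delta}}\log f_{X+N+U_{\Delta}}$, exploiting that $f_{X+N}$ is bounded by $1/\sqrt{2\pi\sigma^{2}}$ and has finite variance --- exactly the technicality that the lower-semicontinuity route sidesteps.
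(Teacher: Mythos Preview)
Your proof is correct and takes a genuinely different route from the paper's. The paper does precisely what you describe as the ``computational alternative'': it writes $I(X;X+Z_{\Delta})=h(X+N+U_{\Delta})-h(N+U_{\Delta})$ and shows that each differential entropy converges separately, not by a direct uniform-integrability argument but by citing Theorem~1 of Linder--Zamir (1994), which handles the convergence $h(W+U_{\Delta})\to h(W)$ under a finite second-moment assumption on $W$. Your approach instead pairs the one-line data-processing upper bound (via the Markov chain $X\to X+N\to X+N+U_{\Delta}$) with lower semicontinuity of mutual information through the Gelfand--Yaglom--Perez partition characterization. What your argument buys is self-containment---no external reference is needed---and it actually does not use the hypothesis $\E{X^2}\le\mathsf{P}$, whereas the paper invokes that bound to verify the second-moment hypothesis of the Linder--Zamir theorem. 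What the paper's route buys is a slightly stronger intermediate conclusion, namely convergence of $h(X+N+U_{\Delta})$ and $h(N+U_{\Delta})$ individually rather than only of their difference.
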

\begin{proof}
Recall that $Z_{\Delta}=N+U_{\Delta}$. To prove Theorem~\ref{thm:cap_highres}, it thus suffices to show that
\begin{IEEEeqnarray}{rCl}
\lim_{\Delta\downarrow 0} h(X+N+U_{\Delta}) & = & h(X+N) \label{eq:thm_highres1}\\
\lim_{\Delta\downarrow 0} h(N+U_{\Delta}) & = & h(N).\label{eq:thm_highres2}
\end{IEEEeqnarray}
Since $N$ is Gaussian and $X$ and $N$ are independent, the differential entropies on the RHS of \eqref{eq:thm_highres1} and \eqref{eq:thm_highres2} are both finite. Furthermore, $\E{N^2}=\sigma^2$, $\E{U_{\Delta}^2}=\frac{\Delta^2}{12}$ and, by the theorem's assumption, $\E{(X+N)^2}\leq \mathsf{P}+\sigma^2$. The above identities \eqref{eq:thm_highres1} and \eqref{eq:thm_highres2} follow therefore directly by specializing the proof of Theorem~1 in \cite{linderzamir94} to the distortion measures $\rho(x)=\delta(x)=x^2$.
\end{proof}

Equation~\eqref{eq:thm_cap_highres} holds for any input distribution satisfying the average-power constraint $\mathsf{P}$, including the capacity-achieving input-distribution. Consequently, Theorem~\ref{thm:cap_highres} implies that the inequality in \eqref{eq:lowres_UB} becomes tight as $\Delta\downarrow 0$.

\begin{corollary}
\label{cor:highres}
Consider the dither-quantized Gaussian channel described in Section~\ref{sec:channel}. Then, for every $\mathsf{P}$ and $\mathsf{A}$,
\begin{equation}
\lim_{\Delta\downarrow 0} C_{\Delta}(\mathsf{P},\mathsf{A}) = C(\mathsf{P},\mathsf{A}).
\end{equation}
\end{corollary}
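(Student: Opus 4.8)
The plan is to deduce Corollary~\ref{cor:highres} from Theorem~\ref{thm:cap_highres} together with the upper bound \eqref{eq:lowres_UB}. By \eqref{eq:dither_add_noise}, we may write $C_{\Delta}(\mathsf{P},\mathsf{A}) = \sup_{P_X} I(X;X+Z_{\Delta})$ and $C(\mathsf{P},\mathsf{A}) = \sup_{P_X} I(X;X+N)$, where in both cases the supremum is over all $P_X$ with $\E{X^2}\leq\mathsf{P}$ and $|X|\leq\mathsf{A}$ almost surely. The inequality \eqref{eq:lowres_UB} already gives $\limsup_{\Delta\downarrow 0} C_{\Delta}(\mathsf{P},\mathsf{A}) \leq C(\mathsf{P},\mathsf{A})$, so it remains to establish the matching lower bound on the $\liminf$.

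For the lower bound, I would fix an arbitrary $\epsilon>0$ and choose an input distribution $P_X^{\star}$ (satisfying both constraints) such that $I(X^{\star};X^{\star}+N) \geq C(\mathsf{P},\mathsf{A}) - \epsilon$; such a distribution exists by definition of the supremum. (Since $|X^\star|\le\mathsf A$ with probability one, a fortiori $\E{(X^\star)^2}\le\mathsf A^2<\infty$, so Theorem~\ref{thm:cap_highres} applies to it.) Applying Theorem~\ref{thm:cap_highres} to this fixed distribution yields $\lim_{\Delta\downarrow 0} I(X^{\star};X^{\star}+Z_{\Delta}) = I(X^{\star};X^{\star}+N)$. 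Since $P_X^{\star}$ is a feasible input distribution for the dither-quantized channel, $C_{\Delta}(\mathsf{P},\mathsf{A}) \geq I(X^{\star};X^{\star}+Z_{\Delta})$ for every $\Delta>0$, and hence
\begin{equation*}
\liminf_{\Delta\downarrow 0} C_{\Delta}(\mathsf{P},\mathsf{A}) \geq \lim_{\Delta\downarrow 0} I(X^{\star};X^{\star}+Z_{\Delta}) = I(X^{\star};X^{\star}+N) \geq C(\mathsf{P},\mathsf{A}) - \epsilon.
\end{equation*}
Letting $\epsilon\downarrow 0$ gives $\liminf_{\Delta\downarrow 0} C_{\Delta}(\mathsf{P},\mathsf{A}) \geq C(\mathsf{P},\mathsf{A})$, which combined with the $\limsup$ bound from \eqref{eq:lowres_UB} yields the claimed limit.

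The argument is essentially a routine ``sup of a limit versus limit of a sup'' interchange, made one-directional by the fact that \eqref{eq:lowres_UB} already supplies the other direction; the only subtlety is that we must not attempt to pass the limit through the supremum uniformly, but instead fix a near-optimal input and use the pointwise convergence of Theorem~\ref{thm:cap_highres}. I anticipate no real obstacle here: the potentially delicate analytic point (continuity of the mutual information in $\Delta$ for fixed $P_X$) has already been dispatched by Theorem~\ref{thm:cap_highres}, which rests on the differential-entropy convergence result of \cite{linderzamir94}. One minor point worth a sentence in the write-up is that the same input distribution is feasible for all $\Delta$ because the peak- and average-power constraints do not depend on $\Delta$, so no truncation or rescaling of $P_X^{\star}$ is needed as $\Delta$ varies.
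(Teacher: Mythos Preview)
Your proposal is correct and takes essentially the same approach as the paper: both use \eqref{eq:lowres_UB} for the upper bound and Theorem~\ref{thm:cap_highres} applied to a (near-)optimal input distribution for the lower bound. The only cosmetic difference is that the paper phrases the lower-bound step as ``maximize the RHS of $\varliminf_{\Delta\downarrow 0} C_{\Delta}(\mathsf{P},\mathsf{A}) \geq I(X;X+N)$ over all feasible $P_X$,'' whereas you spell out the equivalent $\epsilon$-argument explicitly.
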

\begin{proof}
In view of \eqref{eq:lowres_UB}, it suffices to show that
\begin{equation}
\label{eq:cor_highres_LB}
\varliminf_{\Delta\downarrow 0} C_{\Delta}(\mathsf{P},\mathsf{A}) \geq C(\mathsf{P},\mathsf{A})
\end{equation}
where $\varliminf$ denotes the \emph{limit inferior}. To this end, we use that, by Theorem~\ref{thm:cap_highres},  we have for any distribution of $X$ satisfying $\E{X^2}\leq\mathsf{P}$ and $|X|\leq\mathsf{A}$ with probability
\begin{equation}
\varliminf_{\Delta\downarrow 0} C_{\Delta}(\mathsf{P},\mathsf{A}) \geq \lim_{\Delta\downarrow 0} I(X;X+Z_{\Delta}) = I(X;X+N). \label{eq:cor_highres}
\end{equation}
The lower bound \eqref{eq:cor_highres_LB}, and hence Corollary~\ref{cor:highres}, follows by maximizing the RHS of \eqref{eq:cor_highres} over all distributions of $X$ satisfying the power constraints $\mathsf{P}$ and $\mathsf{A}$.
\end{proof}

Theorem~\ref{thm:cap_highres} and Corollary~\ref{cor:highres} demonstrate that, in the high-resolution limit, the dithered quantizer incurs no loss in capacity. As we show next, this is in stark contrast to the low-resolution limit.

\begin{theorem}
\label{thm:cap_lowres}
Consider the dither-quantized Gaussian channel described in Section~\ref{sec:channel}. Then, for every $\mathsf{P}$ and $\mathsf{A}$,
\begin{equation}
\label{eq:thm_cap_lowres}
\lim_{\Delta\to\infty} C_{\Delta}(\mathsf{P},\mathsf{A}) = 0.
\end{equation}
\end{theorem}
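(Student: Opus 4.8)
# Proof Proposal for Theorem~\ref{thm:cap_lowres}

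The plan is to upper-bound $C_{\Delta}(\mathsf{P},\mathsf{A})$ by $C_{\Delta}(\mathsf{P})$ (relaxing the peak constraint never hurts), and then to show that $C_{\Delta}(\mathsf{P})\to 0$ as $\Delta\to\infty$. Using the additive-noise representation~\eqref{eq:dither_add_noise}, it suffices to prove that
\begin{equation}
\label{eq:plan_goal}
\lim_{\Delta\to\infty}\;\sup_{\E{X^2}\leq\mathsf{P}}\; I(X;X+Z_{\Delta})=0.
\end{equation}
The intuition is that as $\Delta\to\infty$ the noise $Z_{\Delta}=N+U_{\Delta}$ becomes overwhelmingly dominated by the uniform component $U_{\Delta}$, whose ``energy'' $\Delta^2/12$ grows without bound while the signal energy stays bounded by $\mathsf{P}$; hence the effective SNR collapses to zero. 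The technical issue is that a single uniform random variable is not a ``nice'' noise in the usual SNR sense (its support is bounded), so I cannot simply invoke the Gaussian capacity formula; I need a bound tailored to the pdf~\eqref{eq:pdf_z}.

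First I would bound $I(X;X+Z_{\Delta})=h(X+Z_{\Delta})-h(Z_{\Delta})$. For the negative term, since $Z_{\Delta}=N+U_{\Delta}$ and differential entropy is superadditive under independent addition when one summand has a density, $h(Z_{\Delta})\ge h(U_{\Delta})=\log\Delta$; more precisely one can write $h(Z_{\Delta})\ge \max\{h(N),h(U_{\Delta})\}$, and for large $\Delta$ the relevant bound is $h(Z_{\Delta})\ge\log\Delta$. For the positive term, I would use the maximum-entropy bound: $X+Z_{\Delta}$ has variance at most $\mathsf{P}+\sigma^2+\Delta^2/12$, so $h(X+Z_{\Delta})\le\frac12\log\bigl(2\pi e(\mathsf{P}+\sigma^2+\Delta^2/12)\bigr)$. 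Combining,
\begin{equation}
\label{eq:plan_bound}
I(X;X+Z_{\Delta})\le \frac12\log\bigl(2\pi e(\mathsf{P}+\sigma^2+\tfrac{\Delta^2}{12})\bigr)-\log\Delta
= \frac12\log\!\left(\frac{2\pi e(\mathsf{P}+\sigma^2+\Delta^2/12)}{\Delta^2}\right),
\end{equation}
uniformly in the choice of $X$. As $\Delta\to\infty$ the argument of the logarithm tends to $2\pi e/12=\pi e/6>1$, so this bound tends to $\frac12\log(\pi e/6)$, which is \emph{positive} — not zero. So the crude max-entropy argument is not quite enough; it shows boundedness but not vanishing.

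The fix, and the step I expect to be the main obstacle, is to get a tighter handle on $h(Z_{\Delta})$. The point is that $Z_{\Delta}$ is not just any variable of variance $\approx\Delta^2/12$; it is $U_{\Delta}$ smoothed by a fixed Gaussian of variance $\sigma^2$. Its pdf~\eqref{eq:pdf_z} is nearly the uniform density $1/\Delta$ on the bulk of $[-\Delta/2,\Delta/2]$, deviating only in boundary layers of width $O(\sigma)$. Hence $h(Z_{\Delta})=\log\Delta+o(1)$ as $\Delta\to\infty$: indeed $h(Z_{\Delta})-\log\Delta = -\int f_{Z_\Delta}(z)\log(\Delta f_{Z_\Delta}(z))\,\ud z$, and since $\Delta f_{Z_\Delta}(z)=Q((z-\Delta/2)/\sigma)-Q((z+\Delta/2)/\sigma)\to \mathbf 1\{|z|<\Delta/2\}$ pointwise with the integrand uniformly controlled, this correction vanishes (one can make this rigorous by splitting the integral into the boundary layers, where the contribution is $O((\sigma/\Delta)\log(\Delta/\sigma))\to 0$, and the interior, where $\Delta f_{Z_\Delta}\to 1$ so the integrand $\to 0$). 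Plugging $h(Z_{\Delta})=\log\Delta+o(1)$ together with the matching refinement $h(X+Z_{\Delta})\le \log\Delta + o(1)$ — obtained the same way, since $X+Z_{\Delta}$ is also $U_{\Delta}$ convolved with the fixed, bounded-variance random variable $X+N$ — yields
\begin{equation}
\label{eq:plan_final}
I(X;X+Z_{\Delta}) = h(X+Z_{\Delta})-h(Z_{\Delta}) \le o(1)
\end{equation}
uniformly over all $X$ with $\E{X^2}\le\mathsf{P}$, which establishes~\eqref{eq:plan_goal} and hence the theorem. Alternatively, and perhaps more cleanly, one can invoke Lemma~3 of~\cite{marconeuhoff06} (or the machinery of~\cite{linderzamir94} already used in Theorem~\ref{thm:cap_highres}) to conclude directly that $h(W+U_{\Delta})-\log\Delta\to 0$ for any fixed $W$ of finite variance, applied to $W=X+N$ and $W=N$; taking the difference and using non-negativity of mutual information gives the same conclusion. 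The care needed is to make the $o(1)$ bounds uniform in the input distribution, which follows because the only input-dependent quantity entering is $\E{(X+N)^2}\le\mathsf{P}+\sigma^2$, a fixed constant.
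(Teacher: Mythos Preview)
Your approach is correct and genuinely different from the paper's. The paper proceeds by first rescaling, writing $I(X;X+Z_{\Delta})=I\bigl(X;\tfrac{1}{\Delta}(X+N)+U_1\bigr)$, and then applying the Lapidoth--Moser duality bound~\eqref{eq:proof_dual} with an explicit output density that is uniform on $[-\alpha,\alpha]$ and has Cauchy tails; the resulting expectation is controlled via Chebyshev and Jensen, and the bound is driven to zero by sending $\beta\downarrow 0$ and $\alpha\downarrow\tfrac12$. Your route is the direct entropy decomposition: since $h(Z_{\Delta})\ge h(U_{\Delta})=\log\Delta$, it suffices to show $h(W+U_{\Delta})-\log\Delta\to 0$ uniformly over all $W$ with $\E{W^2}\le\mathsf{P}+\sigma^2$ (applied to $W=X+N$); equivalently, $I(W;W+U_{\Delta})\to 0$ uniformly. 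Your three-region split (interior where $\Delta f_{W+U_\Delta}\approx 1$, boundary layers of width $O(M)$, and tails controlled by Chebyshev $P(|W|>t)\le C/t^2$) does the job, and the only input-dependent quantity is indeed the second moment, so uniformity holds. Your first attempt via the Gaussian max-entropy bound, yielding the residual $\tfrac12\log(\pi e/6)$, is a nice diagnostic: it is exactly the Gish--Pierce high-resolution constant, reflecting that the Gaussian is the wrong comparison shape for a noise that is essentially uniform. The paper's approach buys an explicit non-asymptotic bound at every $\Delta$ and avoids the tail estimate on $-p\log p$; your approach is more elementary and makes the mechanism (the uniform component swallows everything of bounded variance) transparent.
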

\begin{proof}
See Section~\ref{sec:cap_lowres}.
\end{proof}

Let the signal-to-noise-and-quantization-noise-ratio (SNQNR) of the dither-quantized Gaussian channel be defined as
\begin{equation}
\label{eq:SNQNR}
\mathsf{SNQNR} \triangleq \frac{\E{X^2}}{\E{Z^2_{\Delta}}} = \frac{\mathsf{P}}{\sigma^2+\frac{\Delta^2}{12}}.
\end{equation}
Theorem~\ref{thm:cap_lowres} is perhaps not very surprising since the SNQNR tends to zero as $\Delta$ tends to infinity, so it may seem plausible that also the capacity vanishes in the low-resolution limit. However, note that the additive noise $Z_{\Delta}$ is non-Gaussian, so it is \emph{prima facie} unclear whether there is any relation between capacity and SNQNR.

The weak performance of the dithered, infinite-level, uniform quantizer at low quantizer resolutions is due to the dither. Indeed, when the output of the Gaussian channel is quantized using a symmetric threshold quantizer, the capacity is given by \cite[Th.~2]{singhdabeermadhow09_2}
\begin{equation}
\label{eq:capacity_1bit}
C_{\text{1-bit}}(\mathsf{P}) = C_{\text{1-bit}}\bigl(\mathsf{P},\sqrt{\mathsf{P}}\bigr) = \log 2 - H_{\text{b}}\biggl(Q\left(\sqrt{\mathsf{P}}/\sigma\right)\biggr).
\end{equation}
The RHS of \eqref{eq:capacity_1bit} is strictly positive, so this implies that $C_{\text{1-bit}}\bigl(\mathsf{P},\sqrt{\mathsf{P}}\bigr)>\lim_{\Delta\to\infty} C_{\Delta}(\mathsf{P},\mathsf{A})$.
Moreover, since the concatenation of an infinite-level, uniform quantizer and a symmetric threshold quantizer results again in a threshold quantizer, it follows that the \emph{undithered} uniform quantizer achieves a capacity that is at least as large as the capacity achieved by the 1-bit quantizer. Consequently, adding dither is highly detrimental in the low-resolution regime. As we shall see, the same is also true for the low-SNR asymptotic capacity, unless the peak-to-average-power ratio is unbounded.

\subsection{Proof of Theorem~\ref{thm:cap_lowres}}
\label{sec:cap_lowres}
We first note that $\Delta U_1$ has the same distribution as $U_{\Delta}$. Recalling that $Z_{\Delta}=N+U_{\Delta}$, it thus follows that
\begin{equation}
\label{eq:thm_cap_lowres_1}
I(X;X+Z_{\Delta}) = I\left(X;\frac{1}{\Delta}(X+N)+U_1\right).
\end{equation}
We then prove Theorem~\ref{thm:cap_lowres} by showing that
\begin{equation}
\label{eq:thm_cap_lowres_2}
\lim_{\eps\downarrow 0} \sup I\bigl(X;\eps(X+N)+U_1\bigr) = 0
\end{equation}
where the supremum is over all distributions of $X$ satisfying $\E{X^2}\leq\mathsf{P}$ and $|X|\leq\mathsf{A}$ with probability one.

To prove \eqref{eq:thm_cap_lowres_2}, we will follow the steps that were carried out in \cite[Sec.~II]{kochmartinezguillen12} to derive an upper bound on the capacity of the peak-and-average-power-limited complex Gaussian channel. Specifically, we use the upper bound on the mutual information \cite[Th.~5.1]{lapidothmoser03_3}
\begin{equation}
\label{eq:proof_dual}
I(X;Y) \leq \int D\bigl(W(\cdot|x) \bigm\| R(\cdot)\bigr) \d Q(x)
\end{equation}
where $Q(\cdot)$ denotes the input distribution; $W(\cdot|x)$ denotes the conditional distribution of the channel output, conditioned on $X=x$; and $R(\cdot)$ denotes some arbitrary distribution on the output alphabet. Every choice of $R(\cdot)$ yields an upper bound on $I(X;Y)$, and the inequality in \eqref{eq:proof_dual} holds with equality if $R(\cdot)$ is the actual distribution of $Y$ induced by $Q(\cdot)$ and $W(\cdot|\cdot)$. Here, we choose $R(\cdot)$ such that its pdf is
\begin{equation}
\label{eq:r(y)}
r(y) = \begin{cases} \frac{1}{\Upsilon}, & |y|\leq \alpha \\\frac{1}{\Upsilon} \frac{\sqrt{\beta}}{\pi}\frac{1}{1+\beta y^2}, & |y|>\alpha \end{cases}
\end{equation}
for some $\alpha>\frac{1}{2}$ and $0<\beta<1$, where $\Upsilon$ is a normalizing constant, i.e.,
\begin{equation}
\Upsilon \triangleq 2\alpha + 2 \int_{\alpha}^{\infty} \frac{\sqrt{\beta}}{\pi} \frac{1}{1+\beta y^2} \d y = 1+2\left[\alpha-\frac{1}{\pi}\arctan\left(\frac{\alpha}{\sqrt{\beta}}\right)\right] \label{eq:thm_cap_lowres_2.2}
\end{equation}
and $\arctan(\cdot)$ denotes the inverse tangent function. Combining \eqref{eq:r(y)} with \eqref{eq:proof_dual}, and using that conditioning does not increase entropy, we obtain upon substituting $Y=\eps (X+N)+U_1$
\begin{IEEEeqnarray}{lCl}
I\bigl(X;\eps(X+N)+U_1\bigr) & = & - h\bigl(\eps(X+N)+U_1\bigm|X\bigr) - \E{\log r\bigl(\eps(X+N)+U_1\bigr)} \nonumber\\
& \leq & -h\bigl(\epsilon(X+N)+U_1\bigm| X,N\bigr) - \E{\log r\bigl(\eps(X+N)+U_1\bigr)} \nonumber\\
& = & - \E{\log r\bigl(\eps(X+N)+U_1\bigr)} \label{eq:thm_cap_lowres_2.5}
\end{IEEEeqnarray}
where the last step follows because $U_1$ is independent of $(X,N)$, so \cite[Th.~9.6.3]{coverthomas91} and the expression for the differential entropy of a uniform random variable give \[h\bigl(\epsilon(X+N)+U_1\bigm| X,N\bigr) = h(U_1)=0.\]
We next evaluate
\begin{IEEEeqnarray}{lCl}
\IEEEeqnarraymulticol{3}{l}{- \E{\log r\bigl(\eps(X+N)+U_1\bigr)}}\nonumber\\
\quad & = & \log\Upsilon + \Prob(|Y|>\alpha)\log\frac{\pi}{\sqrt{\beta}} + \E{\log\left(1+\beta Y^2\right)\I{|Y|>\alpha}} \label{eq:thm_cap_lowres_3}
\end{IEEEeqnarray}
where $\I{\cdot}$ denotes the indicator function. When $\Prob(|Y|>\alpha)=0$, then \eqref{eq:thm_cap_lowres_3} is equal to
\begin{equation}
\label{eq:thm_cap_lowres_3.5}
- \E{\log r\bigl(\eps(X+N)+U_1\bigr)} = \log\Upsilon
\end{equation}
and \eqref{eq:thm_cap_lowres_2.2}--\eqref{eq:thm_cap_lowres_3.5} give
\begin{equation}
I\bigl(X;\eps(X+N)+U_1\bigr) \leq \log\left(1+2\left[\alpha-\frac{1}{\pi}\arctan\left(\frac{\alpha}{\sqrt{\beta}}\right)\right]\right). \label{eq:thm_cap_lowres_3.7}
\end{equation}
In the following, we consider the case where $\Prob(|Y|>\alpha)>0$. By the triangle inequality, the absolute value of $Y=\eps(X+N)+U_1$ is upper-bounded by $\eps |X+N| + |U_1|$. Furthermore, $|U_1|\leq\frac{1}{2}$. Consequently,
\begin{equation}
\Prob(|Y|>\alpha) \leq \Prob\left(\eps|X+N|>\alpha-\frac{1}{2}\right) \leq \eps^2\frac{\mathsf{P}+\sigma^2}{\left(\alpha-\frac{1}{2}\right)^2} \label{eq:thm_cap_lowres_4}
\end{equation}
where the right-most inequality follows by Chebyshev's inequality \cite[(4.10.7), p.~192]{AsDo00} and because, for every $X$ satisfying $\E{X^2}\leq \mathsf{P}$, we have $\E{|X+N|^2}\leq\mathsf{P}+\sigma^2$. For ease of exposition, we define $\kappa(\alpha)\triangleq \frac{\mathsf{P}+\sigma^2}{(\alpha-\frac{1}{2})^2}$. Since $\log\frac{\pi}{\sqrt{\beta}}>0$, $0<\beta<1$, applying \eqref{eq:thm_cap_lowres_4} to \eqref{eq:thm_cap_lowres_3} thus gives
\begin{IEEEeqnarray}{lCl}
- \E{\log r\bigl(\eps(X+N)+U_1\bigr)} & \leq & \log\Upsilon + \eps^2\kappa(\alpha)\log\frac{\pi}{\sqrt{\beta}} + \E{\log\left(1+\beta Y^2\right)\I{|Y|>\alpha}}. \IEEEeqnarraynumspace\label{eq:thm_cap_lowres_5}
\end{IEEEeqnarray}
To upper-bounded the last term on the RHS of \eqref{eq:thm_cap_lowres_5}, we use Jensen's inequality to obtain
\begin{equation}
\label{eq:thm_cap_lowres_6}
\E{\log\left(1+\beta Y^2\right)\I{|Y|>\alpha}} \leq \Prob(|Y|>\alpha)\log\left(1+\beta \Exp\bigl[Y^2\bigm| |Y|>\alpha\bigr]\right).
\end{equation}
By Bayes' law, we have
\begin{equation}
\label{eq:thm_cap_lowres_7}
\Exp\bigl[Y^2\bigm| |Y|>\alpha\bigr] = \frac{\E{Y^2\I{|Y|>\alpha}}}{\Prob(|Y|>\alpha)} \leq \frac{\eps^2(\mathsf{P}+\sigma^2)+\frac{1}{12}}{\Prob(|Y|>\alpha)}
\end{equation}
where we used in the right-most inequality that $\E{Y^2\I{|Y|>\alpha}}\leq \E{Y^2}$ and that, for every $X$ satisfying $\E{X^2}\leq\mathsf{P}$, the second moment of $Y$ is upper-bounded by $\eps^2(\mathsf{P}+\sigma^2)+\frac{1}{12}$. Combining \eqref{eq:thm_cap_lowres_7} with \eqref{eq:thm_cap_lowres_6} then gives
\begin{IEEEeqnarray}{lCl}
\E{\log\left(1+\beta Y^2\right)\I{|Y|>\alpha}} & \leq & \Prob(|Y|>\alpha)\log\left(1+\beta\frac{\eps^2(\mathsf{P}+\sigma^2)+\frac{1}{12}}{\Prob(|Y|>\alpha)}\right) \nonumber\\
& = & \Prob(|Y|>\alpha)\log\left(\Prob(|Y|>\alpha)+\beta\left[\eps^2(\mathsf{P}+\sigma^2)+\frac{1}{12}\right]\right) \nonumber\\
& & {} -  \Prob(|Y|>\alpha)\log  \Prob(|Y|>\alpha) \nonumber\\
& \leq & \eps^2\kappa(\alpha)\log\left(1+\beta\left[\eps^2(\mathsf{P}+\sigma^2)+\frac{1}{12}\right]\right) + \sup_{0<\xi\leq\eps^2\kappa(\alpha)} |\xi\log\xi| \IEEEeqnarraynumspace \label{eq:thm_cap_lowres_8}
\end{IEEEeqnarray}
where the last step follows by maximizing $-\Prob(|Y|>\alpha)\log  \Prob(|Y|>\alpha)$ over all $\Prob(|Y|>\alpha)$ satisfying \eqref{eq:thm_cap_lowres_4} and because, by \eqref{eq:thm_cap_lowres_4}, $\Prob(|Y|>\alpha)\leq \min\{\eps^2\kappa(\alpha),1\}$.

Combining \eqref{eq:thm_cap_lowres_5} and \eqref{eq:thm_cap_lowres_8} with \eqref{eq:thm_cap_lowres_2.5}, we obtain for $\Prob(|Y|>\alpha)>0$ that
\begin{IEEEeqnarray}{lCl}
\IEEEeqnarraymulticol{3}{l}{I\bigl(X;\eps(X+N)+U_1\bigr)} \nonumber\\
\quad & \leq & \log\Upsilon + \eps^2\kappa(\alpha)\left[\log\frac{\pi}{\sqrt{\beta}}+\log\left(1+\beta\left[\eps^2(\mathsf{P}+\sigma^2)+\frac{1}{12}\right]\right)\right] + \sup_{0<\xi\leq\eps^2\kappa(\alpha)} |\xi\log\xi|. \IEEEeqnarraynumspace\label{eq:thm_cap_lowres_9}
\end{IEEEeqnarray}
Since the RHS of \eqref{eq:thm_cap_lowres_9} is not smaller than the RHS of \eqref{eq:thm_cap_lowres_3.7}, it follows that
\begin{IEEEeqnarray}{lCl}
\IEEEeqnarraymulticol{3}{l}{\sup I\bigl(X;\eps(X+N)+U_1\bigr)} \nonumber\\
\quad & \leq & \log\Upsilon + \eps^2\kappa(\alpha)\left[\log\frac{\pi}{\sqrt{\beta}}+\log\left(1+\beta\left[\eps^2(\mathsf{P}+\sigma^2)+\frac{1}{12}\right]\right)\right] + \sup_{0<\xi\leq\eps^2\kappa(\alpha)} |\xi\log\xi| \IEEEeqnarraynumspace\label{eq:40}
\end{IEEEeqnarray}
where the supremum on the left-hand side (LHS) of \eqref{eq:40} is over all distribution of $X$ satisfying $\E{X^2}\leq\mathsf{P}$ and $|X|\leq\mathsf{A}$ with probability one. Since the function $\xi\mapsto |\xi\log\xi|$ is continuous for $\xi>0$ and vanishes as $\xi\downarrow 0$, it follows that
\begin{equation}
\varlimsup_{\eps\downarrow 0} \sup I\bigl(X;\eps(X+N)+U_1\bigr) \leq \log\left(1+2\left[\alpha-\frac{1}{\pi}\arctan\left(\frac{\alpha}{\sqrt{\beta}}\right)\right]\right) \label{eq:thm_cap_lowres_10}
\end{equation}
where $\varlimsup$ denotes the \emph{limit superior} and where we have substituted $\Upsilon$ by the RHS of \eqref{eq:thm_cap_lowres_2.2}. The claim \eqref{eq:thm_cap_lowres_2}, and hence Theorem~\ref{thm:cap_lowres}, follows from \eqref{eq:thm_cap_lowres_10} by letting first $\beta\downarrow 0$ and then $\alpha\downarrow\frac{1}{2}$.

\section{Low-SNR Asymptotic Capacity}
\label{sec:lowSNR}
In this section, we discuss capacity at low input powers $\mathsf{P}$. We show that, when the peak-power constraint is relaxed, the low-SNR asymptotic capacity is equal to that of the Gaussian channel irrespective of $\Delta$. We further derive an expression for the low-SNR asymptotic capacity for finite peak-to-average-power ratios and evaluate it in the low- and high-resolution limit. We demonstrate that, in this case, the low-SNR asymptotic capacity converges to that of the unquantized channel when $\Delta$ tends to zero, and it tends to zero when $\Delta$ tends to infinity.

\subsection{Main Results}
\label{sub:main_lowSNR}
\begin{theorem}
\label{thm:CUE_thm1}
Consider the dither-quantized Gaussian channel described in Section~\ref{sec:channel}. Then, irrespective of $\Delta>0$,
\begin{equation}
\dot{C}_{\Delta}^{(\infty)}(0) = \frac{1}{2\sigma^2}.
\end{equation}
\end{theorem}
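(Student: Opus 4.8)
The plan is to establish the two inequalities separately. The upper bound $\dot{C}_{\Delta}^{(\infty)}(0)\leq\frac{1}{2\sigma^2}$ is already given in \eqref{eq:ordering}, so the work is entirely in the matching lower bound $\dot{C}_{\Delta}^{(\infty)}(0)\geq\frac{1}{2\sigma^2}$. For this I would use the characterization \eqref{eq:CUE_verdu}, namely
\[
\dot{C}_{\Delta}^{(\infty)}(0) = \sup_{x\neq 0} \frac{D(P_{X+Z_{\Delta}|X=x}\|P_{X+Z_{\Delta}|X=0})}{x^2},
\]
and show that the ratio on the right tends to $\frac{1}{2\sigma^2}$ as $x\to\infty$. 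Concretely, I would compute $D\bigl(f_{Z_\Delta}(\cdot-x)\,\|\,f_{Z_\Delta}(\cdot)\bigr)$ using the explicit pdf \eqref{eq:pdf_z}, write it as $\int f_{Z_\Delta}(z-x)\log\frac{f_{Z_\Delta}(z-x)}{f_{Z_\Delta}(z)}\,\ud z$, substitute $z\mapsto z+x$ to recenter, and argue that for large $x$ the integrand is governed by the Gaussian tail of $f_{Z_\Delta}$: since $f_{Z_\Delta}$ is a Gaussian smoothed by a compactly supported uniform kernel, $\log f_{Z_\Delta}(z)$ behaves like $-z^2/(2\sigma^2)$ up to lower-order (in fact $O(|z|)$) terms as $|z|\to\infty$. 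Hence $\log\frac{f_{Z_\Delta}(z)}{f_{Z_\Delta}(z+x)}\approx \frac{(z+x)^2-z^2}{2\sigma^2}=\frac{2xz+x^2}{2\sigma^2}$, and taking expectation over $z\sim f_{Z_\Delta}$ (which has mean $0$ and finite variance) yields $D\approx\frac{x^2}{2\sigma^2}+o(x^2)$, so the ratio converges to $\frac{1}{2\sigma^2}$.

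An alternative and perhaps cleaner route, which avoids delicate tail estimates, is to lower-bound the relative entropy directly. Since $Z_\Delta=N+U_\Delta$ with $U_\Delta$ supported on $[-\Delta/2,\Delta/2]$, one can condition on the dither: $D(P_{X+Z_\Delta|X=x}\|P_{X+Z_\Delta|X=0})$ is the relative entropy between two mixtures (over $U_\Delta$) of shifted Gaussians. Using convexity of relative entropy in a suitable way, or better, using that the mutual information $I(X;X+Z_\Delta)$ with a two-point input $X\in\{0,x\}$ is at least $I(X;X+Z_\Delta\,|\,U_\Delta)=I(X;X+N)$ (by the data-processing-type identity underlying \eqref{eq:dither_add_noise}, applied with the binary input), one sees that the dithered channel with the dither revealed is at least as good as the pure Gaussian channel. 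More precisely, for a binary input taking value $x$ with small probability $p$ and $0$ otherwise, $I(X;X+Z_\Delta|U_\Delta)=I(X;X+N)$, and as $p\downarrow0$ this mutual information behaves like $p\cdot D(\mathcal N(x,\sigma^2)\|\mathcal N(0,\sigma^2))=p\cdot\frac{x^2}{2\sigma^2}$; optimizing the input power $\E{X^2}=px^2=\mathsf P$ then gives $C_\Delta(\mathsf P)\geq \frac{\mathsf P}{2\sigma^2}+o(\mathsf P)$, i.e.\ $\dot C_\Delta^{(\infty)}(0)\geq\frac{1}{2\sigma^2}$. This is arguably the slickest argument: it directly exhibits flash-signaling-type inputs achieving the unquantized slope.

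The main obstacle in the first (direct relative-entropy) approach is making the tail asymptotics of $\log f_{Z_\Delta}$ rigorous and uniform enough to control the expectation, in particular ruling out that the heavier-looking structure near the boundary of the uniform kernel contributes at order $x^2$; this requires showing $\bigl|\log f_{Z_\Delta}(z)+\tfrac{z^2}{2\sigma^2}\bigr|\leq C(1+|z|)$ for all $z$, which follows from \eqref{eq:pdf_z} by bounding the bracketed difference of $Q$-functions above and below using standard Gaussian tail bounds, but is somewhat technical. The second approach sidesteps this entirely, so I would present that one: the only thing to check carefully is the interchange of limit and the $o(p)$ term in the expansion of $I(X;X+N)$ for the binary input as $p\downarrow0$, which is a standard computation (e.g.\ via \eqref{eq:CUE_verdu} applied to the unquantized channel, whose slope is well known to be $\frac{1}{2\sigma^2}$). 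Combining the lower bound with \eqref{eq:ordering} completes the proof.
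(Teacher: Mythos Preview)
Your second (``cleaner'') route contains a genuine error: the inequality
\[
I(X;X+Z_\Delta)\;\ge\;I(X;X+Z_\Delta\mid U_\Delta)
\]
goes the wrong way. Since $X$ and $U_\Delta$ are independent, the chain rule gives
$I(X;X+Z_\Delta,U_\Delta)=I(X;X+Z_\Delta\mid U_\Delta)$, while trivially
$I(X;X+Z_\Delta,U_\Delta)\ge I(X;X+Z_\Delta)$. Hence
$I(X;X+Z_\Delta)\le I(X;X+Z_\Delta\mid U_\Delta)=I(X;X+N)$, which is just the data-processing upper bound \eqref{eq:lowres_UB} again and gives no lower bound on $\dot C_\Delta^{(\infty)}(0)$. (The identity \eqref{eq:dither_add_noise} you invoke is $I(X;Y_\Delta\mid U_\Delta)=I(X;X+Z_\Delta)$, with the \emph{quantized} output $Y_\Delta$ on the left; it does not relate $I(X;X+Z_\Delta)$ to $I(X;X+N)$ in the direction you need.) So this route does not prove the lower bound.

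Your first route---showing $D\bigl(f_{Z_\Delta}(\cdot-x)\,\|\,f_{Z_\Delta}\bigr)=\tfrac{x^2}{2\sigma^2}+o(x^2)$ via the tail expansion $\log f_{Z_\Delta}(z)=-\tfrac{z^2}{2\sigma^2}+O(|z|)$---is viable and, once the $O(|z|)$ bound is justified uniformly (which does follow from \eqref{eq:pdf_z} and the standard $Q$-function bounds), yields the claim. This is, however, a different argument from the paper's. The paper avoids these tail estimates by \emph{coarsening} the output: it passes $X+Z_\Delta$ through a threshold $V=\I{X+Z_\Delta\ge T}$ and uses the data-processing inequality for relative entropy to lower-bound $D(P_{X+Z_\Delta|X=x}\|P_{X+Z_\Delta|X=0})$ by the binary divergence $D(P_{V|X=x}\|P_{V|X=0})$, which is then evaluated explicitly with $x$ and $T$ tending to infinity together. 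The paper's approach trades the analytic control of $\log f_{Z_\Delta}$ for a one-line reduction to a $Q$-function computation; your direct approach is more hands-on but conceptually equivalent, since both ultimately exploit the Gaussian tail of $f_{Z_\Delta}$.
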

\begin{proof}
See Section~\ref{sec:CUE_thm1}.
\end{proof}

Theorem~\ref{thm:CUE_thm1} is reminiscent of  Theorem~2 in \cite{kochlapidoth13}, which states that the low-SNR asymptotic capacity of the 1-bit quantized Gaussian channel equals $1/(2\sigma^2)$, provided that we allow for flash-signaling input distributions. As noted before, the concatenation of a uniform and a 1-bit quantizer results again in a 1-bit quantizer, so Theorem~\ref{thm:CUE_thm1} may perhaps not be very surprising. However, in general it is unclear how a \emph{dithered} uniform quantizer compares to a 1-bit quantizer, since the dither potentially reduces capacity. In fact, as we shall see next, for finite a peak-to-average-power ratio and as $\Delta$ becomes large, the dither significantly reduces the low-SNR asymptotic capacity.

\begin{theorem}
\label{thm:CUE_thm2}
Consider the dither-quantized Gaussian channel described in Section~\ref{sec:channel}. Then, irrespective of $\mathsf{K}$,
\begin{equation}
\label{eq:CUE_thm2}
\dot{C}_{\Delta}^{(\mathsf{K})}(0) = \frac{1}{\Delta}\frac{1}{4\pi\sigma^2}\int\limits_{-\infty}^{\infty} \frac{\left[e^{-\frac{(y-\Delta/2)^2}{2\sigma^2}}-e^{-\frac{(y+\Delta/2)^2}{2\sigma^2}}\right]^2}{Q\left(\frac{y-\Delta/2}{\sigma}\right)-Q\left(\frac{y+\Delta/2}{\sigma}\right)} \d y.
\end{equation}
\end{theorem}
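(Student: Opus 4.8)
The plan is to apply the asymptotic identity \eqref{eq:CUE_prelov}, which reduces the theorem to evaluating $\tfrac{1}{2}I(0)$, where $I(\cdot)$ is the Fisher information \eqref{eq:FI} associated with the additive noise $Z_\Delta = N + U_\Delta$ whose density is \eqref{eq:pdf_z}. Before doing the computation, I would verify that the dither-quantized Gaussian channel --- equivalently, the additive-noise channel $X \mapsto X + Z_\Delta$ --- meets the regularity hypotheses under which \eqref{eq:CUE_prelov} is valid. Since $f_{Z_\Delta}$ is the convolution of a Gaussian density with the compactly supported uniform density of $U_\Delta$, it is $C^\infty$ (one may differentiate under the integral because all derivatives of the Gaussian are bounded and integrable), strictly positive on $\Reals$, bounded, and all its derivatives decay at a Gaussian rate; moreover $Z_\Delta$ has finite variance $\sigma^2 + \Delta^2/12$. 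Together with the fact that a finite peak-to-average-power ratio $\mathsf{K}$ confines the input to a bounded set (thereby excluding flash signaling), these are precisely the smoothness, positivity, and moment conditions required by \cite{ibragimovkhasminskii72}, \cite{prelovvandermeulen93}. I would also note that $I(x)$ is finite and continuous in a neighbourhood of $x=0$: finiteness at $x=0$ will follow from the explicit formula below together with a tail estimate, and continuity near $0$ from dominated convergence using the uniform Gaussian-type bounds on $f_{Z_\Delta}$ and $f_{Z_\Delta}'$.

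Next I would compute $I(0)$ explicitly. Writing $g(z) \triangleq Q\!\left(\frac{z-\Delta/2}{\sigma}\right) - Q\!\left(\frac{z+\Delta/2}{\sigma}\right)$ so that $f_{Z_\Delta} = g/\Delta$, and using $Q'(t) = -\frac{1}{\sqrt{2\pi}}e^{-t^2/2}$, one obtains
\[
g'(z) = \frac{1}{\sigma\sqrt{2\pi}}\left[e^{-\frac{(z+\Delta/2)^2}{2\sigma^2}} - e^{-\frac{(z-\Delta/2)^2}{2\sigma^2}}\right].
\]
Since $\frac{\partial}{\partial x} f_{Z_\Delta}(y-x) = -f_{Z_\Delta}'(y-x)$, which at $x=0$ equals $-g'(y)/\Delta$, the integrand of \eqref{eq:FI} at $x=0$ is $[g'(y)]^2/(\Delta\,g(y))$; substituting $[g'(y)]^2 = \frac{1}{2\pi\sigma^2}\left[e^{-(y-\Delta/2)^2/(2\sigma^2)} - e^{-(y+\Delta/2)^2/(2\sigma^2)}\right]^2$ yields
\[
I(0) = \frac{1}{\Delta}\,\frac{1}{2\pi\sigma^2}\int_{-\infty}^{\infty} \frac{\left[e^{-\frac{(y-\Delta/2)^2}{2\sigma^2}} - e^{-\frac{(y+\Delta/2)^2}{2\sigma^2}}\right]^2}{Q\!\left(\frac{y-\Delta/2}{\sigma}\right) - Q\!\left(\frac{y+\Delta/2}{\sigma}\right)}\, \d y.
\]
Multiplying by $\tfrac12$ gives exactly \eqref{eq:CUE_thm2}. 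To close the argument I would check that this integral is finite: the denominator is a strictly positive continuous function, and as $|y| \to \infty$ the estimate $Q(t) \sim \frac{1}{t\sqrt{2\pi}} e^{-t^2/2}$ shows the integrand decays like a polynomial times $e^{-(|y|-\Delta/2)^2/(2\sigma^2)}$, hence is integrable; this finiteness is also what \eqref{eq:CUE_prelov} requires.

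I expect the main obstacle to be the verification in the first step rather than the calculus: identifying precisely which of the ``technical conditions'' behind \eqref{eq:CUE_prelov} are invoked and confirming each holds for an additive-noise channel whose noise is a Gaussian--uniform mixture. In particular one must confirm that the local-asymptotic-normality/Fisher-information hypotheses of \cite{ibragimovkhasminskii72}, \cite{prelovvandermeulen93} apply here, and that it is the finite peak-to-average-power constraint --- not merely the average-power constraint --- that legitimises the second-order expansion of $C_\Delta\!\left(\mathsf{P},\sqrt{\mathsf{K}\mathsf{P}}\right)$ about $\mathsf{P}=0$, so that the slope is genuinely $\tfrac12 I(0)$ and, in particular, does not depend on $\mathsf{K}$. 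Everything downstream --- differentiating the $Q$-function, the algebraic simplification, and the tail bound for integrability --- is routine.
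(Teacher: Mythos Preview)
Your proposal is correct and follows essentially the same route as the paper: verify the regularity hypotheses behind \eqref{eq:CUE_prelov} and then compute $\tfrac{1}{2}I(0)$ from \eqref{eq:pdf_z}, arriving at the stated integral. The only difference is granularity---the paper explicitly lists and checks the six Prelov--van~der~Meulen conditions (A--F), with the less immediate ones (finiteness of $I(x)$, mean-square continuity of $\partial_x\sqrt{f}$, and the vanishing-tail condition~F) relegated to an appendix, whereas you bundle these under ``smoothness, positivity, and moment conditions'' and correctly flag their verification as the real work.
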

\begin{proof}
See Section~\ref{sec:CUE_thm2}.
\end{proof}

Observe that for a finite peak-to-average-power ratio, the low-SNR asymptotic capacity depends on $\Delta$. We next study the behavior of $\dot{C}_{\Delta}^{(\mathsf{K})}(0)$ as $\Delta\downarrow 0$ and $\Delta\to\infty$.

\begin{corollary}
\label{cor}
Consider the dither-quantized Gaussian channel described in Section~\ref{sec:channel}. Then,
\begin{subequations}
\begin{IEEEeqnarray}{rrCl}
\text{i)} & \quad \lim_{\Delta\downarrow 0}  \dot{C}_{\Delta}^{(\mathsf{K})}(0) & = & \frac{1}{2\sigma^2} \\
\text{ii)} & \quad \lim_{\Delta\to\infty} \dot{C}_{\Delta}^{(\mathsf{K})}(0) & = & 0.
\end{IEEEeqnarray}
\end{subequations}
\end{corollary}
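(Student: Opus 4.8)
The plan is to work from the closed-form expression of Theorem~\ref{thm:CUE_thm2} (equivalently, from the identity $\dot{C}_{\Delta}^{(\mathsf{K})}(0)=\tfrac{1}{2}I(0)$ with $I(0)$ the Fisher information \eqref{eq:FI} of $f_{Z_{\Delta}}$) and to treat the two limits separately. Write $\varphi_{\sigma}$ for the zero-mean, variance-$\sigma^{2}$ Gaussian density and set $c_{\Delta}(y)\triangleq\Delta\,f_{Z_{\Delta}}(y)=Q\bigl(\tfrac{y-\Delta/2}{\sigma}\bigr)-Q\bigl(\tfrac{y+\Delta/2}{\sigma}\bigr)=\int_{y-\Delta/2}^{y+\Delta/2}\varphi_{\sigma}(t)\,\ud t$, so that $c_{\Delta}'(y)=\varphi_{\sigma}(y+\Delta/2)-\varphi_{\sigma}(y-\Delta/2)$ and the Gaussian difference in the bracket of Theorem~\ref{thm:CUE_thm2} equals $-\sigma\sqrt{2\pi}\,c_{\Delta}'(y)$. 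With this notation Theorem~\ref{thm:CUE_thm2} reads $\dot{C}_{\Delta}^{(\mathsf{K})}(0)=\tfrac{1}{2}\int_{-\infty}^{\infty}\frac{(c_{\Delta}'(y)/\Delta)^{2}}{c_{\Delta}(y)/\Delta}\,\ud y$.

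For part~i), I would combine a pointwise limit with Fatou's lemma and the a priori bound \eqref{eq:ordering}. For every fixed $y$, $c_{\Delta}(y)/\Delta$ is the average of $\varphi_{\sigma}$ over the interval of length $\Delta$ centered at $y$, hence tends to $\varphi_{\sigma}(y)>0$ as $\Delta\downarrow 0$; likewise $c_{\Delta}'(y)/\Delta$ is a symmetric difference quotient, hence tends to $\varphi_{\sigma}'(y)$. Thus the nonnegative integrand converges pointwise to $(\varphi_{\sigma}'(y))^{2}/\varphi_{\sigma}(y)$, and Fatou's lemma gives $\varliminf_{\Delta\downarrow 0}\dot{C}_{\Delta}^{(\mathsf{K})}(0)\geq\tfrac{1}{2}\int_{-\infty}^{\infty}\frac{(\varphi_{\sigma}'(y))^{2}}{\varphi_{\sigma}(y)}\,\ud y=\tfrac{1}{2\sigma^{2}}$, the last step being the Fisher information of a Gaussian. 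Since \eqref{eq:ordering} already gives $\dot{C}_{\Delta}^{(\mathsf{K})}(0)\leq\tfrac{1}{2\sigma^{2}}$ for every $\Delta$, part~i) follows. (Alternatively, since $Z_{\Delta}=N+U_{\Delta}$ converges in distribution to $N$ as $\Delta\downarrow 0$, one may invoke lower semicontinuity of Fisher information for the lower bound and the convolution inequality $I(N+U_{\Delta})\leq I(N)$ for the upper bound.)

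For part~ii), I would bound the explicit integrand of Theorem~\ref{thm:CUE_thm2} and let the prefactor $1/\Delta$ do the rest. The integrand is even in $y$, so it suffices to bound twice the integral over $\{y\geq 0\}$. With $a_{\Delta}(y)=e^{-(y-\Delta/2)^{2}/2\sigma^{2}}$ and $b_{\Delta}(y)=e^{-(y+\Delta/2)^{2}/2\sigma^{2}}$ one has $0\leq b_{\Delta}(y)\leq a_{\Delta}(y)$ for $y\geq 0$, hence $[a_{\Delta}(y)-b_{\Delta}(y)]^{2}\leq a_{\Delta}(y)^{2}=e^{-(y-\Delta/2)^{2}/\sigma^{2}}$. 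For the denominator, on $0\leq y\leq\Delta/2$ one has $c_{\Delta}(y)\geq\tfrac{1}{2}-Q\bigl(\tfrac{\Delta}{2\sigma}\bigr)$, which exceeds $\tfrac14$ once $\Delta$ is large; for $y>\Delta/2$, writing $s=(y-\Delta/2)/\sigma\geq 0$ and using the elementary inequality $Q(s+t)\leq Q(s)e^{-st-t^{2}/2}$ with $t=\Delta/\sigma$, one gets $c_{\Delta}(y)=Q(s)-Q(s+\Delta/\sigma)\geq Q(s)\bigl(1-e^{-\Delta^{2}/2\sigma^{2}}\bigr)\geq\tfrac{1}{2}Q\bigl(\tfrac{y-\Delta/2}{\sigma}\bigr)$ for $\Delta$ large. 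Substituting $u=y-\Delta/2$ and combining, for all sufficiently large $\Delta$ the integral over $\{y\geq 0\}$ is at most $4\int_{-\infty}^{0}e^{-u^{2}/\sigma^{2}}\,\ud u+2\int_{0}^{\infty}\frac{e^{-u^{2}/\sigma^{2}}}{Q(u/\sigma)}\,\ud u$, a finite constant independent of $\Delta$ (finiteness of the last integral follows from $Q(u/\sigma)\geq Q(1/\sigma)$ on $[0,1]$ and the standard bound $Q(u/\sigma)\geq\tfrac{u\sigma}{\sigma^{2}+u^{2}}\tfrac{1}{\sqrt{2\pi}}e^{-u^{2}/2\sigma^{2}}$ for $u\geq 1$). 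Hence $\dot{C}_{\Delta}^{(\mathsf{K})}(0)=O(1/\Delta)$, which, being nonnegative, tends to $0$ as $\Delta\to\infty$.

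The main obstacle is in part~ii): producing a lower bound on $c_{\Delta}(y)$ valid on the entire overload range $y>\Delta/2$ that is uniform in $\Delta$ yet still integrable against $a_{\Delta}(y)^{2}$. The bound $Q(s+t)\leq Q(s)e^{-st-t^{2}/2}$ is precisely what renders $e^{-(y-\Delta/2)^{2}/\sigma^{2}}/c_{\Delta}(y)$ integrable there, and one must also verify that $\int_{0}^{\infty}e^{-u^{2}/\sigma^{2}}/Q(u/\sigma)\,\ud u<\infty$. By contrast, part~i) is essentially routine once the pointwise limit of the integrand is identified with the Gaussian Fisher-information density and the matching upper bound is taken from \eqref{eq:ordering}.
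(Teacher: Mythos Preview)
Your proposal is correct and follows the same overall strategy as the paper: for Part~i), Fatou's lemma on the integrand of Theorem~\ref{thm:CUE_thm2} gives the lower bound and the ordering \eqref{eq:ordering} gives the matching upper bound; for Part~ii), one shows that the integral in \eqref{eq:CUE_thm2} is bounded uniformly in $\Delta$, so the prefactor $1/\Delta$ drives the limit to zero. The execution differs in two minor respects. In Part~i) your identification of $c_{\Delta}(y)/\Delta$ and $c_{\Delta}'(y)/\Delta$ as an average and a symmetric difference quotient gives the pointwise limit immediately, whereas the paper applies l'H\^opital's rule twice to reach the same limit. In Part~ii) you split the integral at $y=\Delta/2$ and control the overload region via $Q(s+t)\leq Q(s)e^{-st-t^{2}/2}$, while the paper splits at $|y|=\vartheta+\Delta/2$ for a fixed $\vartheta$ and uses the two-sided Mills-ratio bounds \eqref{eq:Q_bounds}; both routes produce a finite, $\Delta$-independent bound on the integral, and your version avoids the auxiliary parameter $\vartheta$.
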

\begin{proof}
See Section~\ref{sub:proof_cor}.
\end{proof}
Corollary~\ref{cor} demonstrates that, for finite peak-to-average-power ratios, the low-SNR asymptotic capacity of the dither-quantized Gaussian channel approaches that of the Gaussian channel in the high-resolution limit and it vanishes in the low-resolution limit. The latter result is in stark contrast to Proposition~2 in \cite{kochlapidoth13} (see also \cite{viterbiomura79},\cite{singhdabeermadhow09_2}), which demonstrates that for a 1-bit quantizer and $\mathsf{K}=1$, the low-SNR asymptotic capacity equals $1/(\pi\sigma^2)$. Thus, for finite peak-to-average-power ratios, a low-resolution dithered quantizer performs significantly worse than a 1-bit quantizer.

\subsection{Proof of Theorem~\ref{thm:CUE_thm1}}
\label{sec:CUE_thm1}
We shall show that
\begin{equation}
\label{eq:thm1_1}
\sup_{x\neq 0} \frac{D(P_{X+Z_{\Delta}|X=x}\|P_{X+Z_{\Delta}|X=0})}{x^2} \geq \frac{1}{2\sigma^2}.
\end{equation}
Theorem~\ref{thm:CUE_thm1} follows then from \eqref{eq:CUE_verdu}, \eqref{eq:thm1_1}, and \eqref{eq:ordering}. Let
\begin{equation}
V \triangleq \I{X+Z_{\Delta}\geq \Delta\ell_0-\delta}
\end{equation}
for some arbitrary $\ell_0,\delta>0$. By the data processing inequality for relative entropy \cite[Sec.~2.9]{coverthomas91}
\begin{IEEEeqnarray}{lCl}
D(P_{X+Z_{\Delta}|X=x}\|P_{X+Z_{\Delta}|X=0}) & \geq & D(P_{V|X=x}\|P_{V|X=0}) \IEEEeqnarraynumspace \label{eq:thm1_2}
\end{IEEEeqnarray}
where $P_{V|X=x}$ denotes the conditional distribution of $V$ given $X=x$.
Intuitively, $V$ can be viewed as the output of a threshold quantizer with threshold $\Delta\ell_0-\delta$ and input $X+Z_{\Delta}$. Introducing $V$ thus allows us to analyze the RHS of \eqref{eq:thm1_2} following similar steps as the ones reported in \cite[Sec.~VIII-A]{kochlapidoth13}. Indeed, as in \cite[Eq.~(134)]{kochlapidoth13}, we can express the relative entropy as
\begin{IEEEeqnarray}{lCl}
D(P_{V|X=x}\|P_{V|X=0})& = & \left[1-P_{V|X}(1|x)\right] \log\frac{1}{1-P_{V|X}(1|0)} \nonumber\\
& & {} + P_{V|X}(1|x)\log\frac{1}{P_{V|X}(1|0)}  - H_{\text{b}}\bigl(P_{V|X}(1|x)\bigr)\IEEEeqnarraynumspace\,\, \label{eq:thm1_3}
\end{IEEEeqnarray}
where $\log(\cdot)$ denotes the natural logarithm function; $H_{\text{b}}(\cdot)$ denotes the binary entropy function \cite[Eq.~(2.5)]{coverthomas91}; and $P_{V|X}(1|x)\triangleq \Prob(X+Z_{\Delta}\geq \Delta \ell_0-\delta|X=x)$, which can be written as
\begin{equation}
\label{eq:P_V}
P_{V|X}(1|x) = \frac{1}{\Delta} \int_{-\Delta/2}^{\Delta/2} Q\left(\frac{\Delta\ell_0-\delta-x-u}{\sigma}\right) \d u.
\end{equation}
Using that $0<P_{V|X}(1|x)<1$, $x\in\Reals$ and $H_{\text{b}}(p)\leq\log 2$, $0\leq p\leq 1$, \eqref{eq:thm1_3} can be further lower-bounded as
\begin{IEEEeqnarray}{lCl}
D(P_{V|X=x}\|P_{V|X=0}) & \geq & P_{V|X}(1|x)\log\frac{1}{P_{V|X}(1|0)} - \log 2.\label{eq:thm1_4}
\end{IEEEeqnarray}

We next choose $x=\Delta\ell_0 + \Delta/2$ and lower-bound the supremum in \eqref{eq:thm1_1} by letting $\ell_0$ tend to infinity. Together with \eqref{eq:thm1_2} and \eqref{eq:thm1_4}, this yields
\begin{IEEEeqnarray}{lCl}
\sup_{x\neq 0} \frac{D(P_{X+Z_{\Delta}|X=x}\|P_{X+Z_{\Delta}|X=0})}{x^2} & \geq & \lim_{\ell_0\to\infty} \frac{P_{V|X}(1|\Delta\ell_0 + \Delta/2)}{(\Delta\ell_0+\Delta/2)^2} \log\frac{1}{P_{V|X}(1|0)}. \label{eq:thm1_5}
\end{IEEEeqnarray}
By \eqref{eq:P_V} and the monotonicity of the $Q$-function, we obtain
\begin{equation}
\label{eq:thm1_PV1}
P_{V|X}(1|\Delta\ell_0+\Delta/2) \geq Q\left(-\frac{\delta}{\sigma}\right).
\end{equation}
Moreover, by \eqref{eq:P_V} and the following bounds on the $Q$-function \cite[Prop.~19.4.2]{lapidoth09}
\begin{equation}
\label{eq:Q_bounds}
\frac{1}{\sqrt{2\pi x^2}}e^{-x^2/2}\left(1-\frac{1}{x^{2}}\right) < Q(x) < \frac{1}{\sqrt{2\pi x^2}}e^{-x^2/2}, \quad x>0
\end{equation}
we have for sufficiently large $\ell_0$
\begin{equation}
\label{eq:thm1_PV2}
P_{V|X}(1|0) \leq \frac{1}{\sqrt{2\pi}}\frac{\sigma}{\Delta\ell_0-\delta-\Delta/2} e^{-\frac{(\Delta\ell_0-\delta-\Delta/2)^2}{2\sigma^2}}.
\end{equation}
Applying \eqref{eq:thm1_PV1} and \eqref{eq:thm1_PV2} to \eqref{eq:thm1_5} yields
\begin{IEEEeqnarray}{lCl}
\IEEEeqnarraymulticol{3}{l}{\sup_{x\neq 0} \frac{D(P_{X+Z_{\Delta}|X=x}\|P_{X+Z_{\Delta}|X=0})}{x^2}} \nonumber\\
\quad & \geq & \lim_{\ell_0\to\infty}\frac{Q(-\delta/\sigma)}{(\Delta\ell_0+\Delta/2)^2}\Biggl[\log\left(\sqrt{2\pi}\frac{\Delta\ell_0-\delta-\Delta/2}{\sigma}\right) +\frac{(\Delta\ell_0-\delta-\Delta/2)^2}{2\sigma^2}\Biggr] \nonumber\\
& = & Q\left(-\frac{\delta}{\sigma}\right)\frac{1}{2\sigma^2}. \label{eq:thm1_last}
\end{IEEEeqnarray}
The final result \eqref{eq:thm1_1}, and hence Theorem~\ref{thm:CUE_thm1}, follows from \eqref{eq:thm1_last} by letting $\delta$ tend to infinity.

\subsection{Proof of Theorem~\ref{thm:CUE_thm2}}
\label{sec:CUE_thm2}
In order for \eqref{eq:CUE_prelov} to hold, for every $\Delta>0$, the channel law must satisfy six conditions  \cite[Sec.~II]{prelovvandermeulen93}:
\begin{enumerate}
\renewcommand{\labelenumi}{\Alph{enumi}.}
\item The channel can be described by a pdf $f_{Y|X}$.
\item The pdf $f_{Y|X}(y|x)$ is bounded for all $|x|<\epsilon$ (for some $\epsilon>0$) and $y\in\Reals$.
\item The partial derivative $\frac{\partial}{\partial x} f_{Y|X}(y|x)$ exists for all $|x|<\epsilon$ and $y\in\Reals$.
\item The Fisher information \eqref{eq:FI} exists and is finite for all $|x|<\epsilon$.
\item The function $\frac{\partial}{\partial x} \bigl(\sqrt{f_{Y|X}(y|x)}\bigr)$ is uniformly continuous in the mean square with respect to $|x|<\epsilon$.
\item For any $\delta>0$,
\begin{equation}
\label{eq:Condition_F}
\lim_{\epsilon\to 0} \frac{1}{\epsilon} \int_{-\epsilon}^{\epsilon} \int_{\set{B}_{\epsilon,\delta}} \frac{[\frac{\partial}{\partial x} f_{Y|X}(y|x)]^2}{f_{Y|X}(y|x)} \d y \d x = 0
\end{equation}
where
\begin{equation}
\label{eq:Beps}
\set{B}_{\epsilon,\delta}\triangleq \left\{y\in\Reals\colon \sup_{|x|<\epsilon}\left|\log\frac{f_{Y|X}(y|x)}{f_{Y|X}(y|0)}\right|>\delta\right\}.
\end{equation}
\end{enumerate}

Note that, for the channel model described in Section~\ref{sec:channel}, we have $f_{Y|X}(y|x)=f_{Z_{\Delta}}(y-x)$. Thus, Conditions~A and B follow directly by inspecting \eqref{eq:pdf_z}. Furthermore, using that $\frac{\partial}{\partial x} Q(x) = -\frac{1}{\sqrt{2\pi}}e^{-\frac{x^2}{2}}$, we obtain from \eqref{eq:pdf_z} that
\begin{IEEEeqnarray}{lCl}
\frac{\partial}{\partial x} f_{Y|X}(y|x) & = & \frac{1}{\Delta}\frac{1}{\sqrt{2\pi\sigma^2}} \left[e^{-\frac{(y-x-\Delta/2)^2}{2\sigma^2}}-e^{-\frac{(y-x+\Delta/2)^2}{2\sigma^2}}\right] \label{eq:thm2_1}
\end{IEEEeqnarray}
which proves Condition~C. This also demonstrates that the Fisher information \eqref{eq:FI} exists and is given by
\begin{equation}
\label{eq:thm2_2}
I(x) = \frac{1}{\Delta} \frac{1}{2\pi\sigma^2} \int\limits_{-\infty}^{\infty} \frac{\left[e^{-\frac{(y-x-\Delta/2)^2}{2\sigma^2}}-e^{-\frac{(y-x+\Delta/2)^2}{2\sigma^2}}\right]^2}{Q\left(\frac{y-x-\Delta/2}{\sigma}\right)-Q\left(\frac{y-x+\Delta/2}{\sigma}\right)} \d y.
\end{equation}
To prove Condition~D, it thus remains to show that the Fisher information is finite for all $|x|<\epsilon$. This, as well as Conditions~E and F, require slightly more involved proofs, which are presented in Appendix~\ref{app:thm2}. Having proven Conditions~A-F, Theorem~\ref{thm:CUE_thm2} follows directly by combining \eqref{eq:thm2_2} with \eqref{eq:CUE_prelov}.

\subsection{Proof of Corollary~\ref{cor}}
\label{sub:proof_cor}

\subsubsection{Part~i)}
To prove Part~i), we note that, by \eqref{eq:ordering}, we have $\dot{C}_{\Delta}^{(\mathsf{K})}(0)\leq 1/(2\sigma^2)$. It thus remains to show that
\begin{equation}
\label{eq:cor_HR_ASH}
\varliminf_{\Delta\downarrow 0} \dot{C}_{\Delta}^{(\mathsf{K})}(0) \geq \frac{1}{2\sigma^2}.
\end{equation}
To this end, we use Fatou's lemma \cite[(1.6.8), p.~50]{AsDo00} to lower-bound \eqref{eq:CUE_thm2} as
\begin{IEEEeqnarray}{lCl}
\varliminf_{\Delta\downarrow 0} \dot{C}_{\Delta}^{(\mathsf{K})}(0) & \geq & \frac{1}{4\pi\sigma^2}\int\limits_{-\infty}^{\infty} \varliminf_{\Delta\downarrow 0} \frac{1}{\Delta}\frac{\left[e^{-\frac{(y-\Delta/2)^2}{2\sigma^2}}-e^{-\frac{(y+\Delta/2)^2}{2\sigma^2}}\right]^2}{Q\left(\frac{y-\Delta/2}{\sigma}\right)-Q\left(\frac{y+\Delta/2}{\sigma}\right)} \d y.
\end{IEEEeqnarray}
We next apply l'H\^opital's rule twice to compute the limit inside the integral. Indeed, we have
\begin{IEEEeqnarray}{lCl}
\IEEEeqnarraymulticol{3}{l}{\frac{\partial}{\partial\Delta} \left[e^{-\frac{(y-\Delta/2)^2}{2\sigma^2}}-e^{-\frac{(y+\Delta/2)^2}{2\sigma^2}}\right]^2} \nonumber\\
\quad & = & 2\left[e^{-\frac{(y-\Delta/2)^2}{2\sigma^2}}-e^{-\frac{(y+\Delta/2)^2}{2\sigma^2}}\right]\left[\frac{y-\Delta/2}{2\sigma^2}e^{-\frac{(y-\Delta/2)^2}{2\sigma^2}}+\frac{y+\Delta/2}{2\sigma^2}e^{-\frac{(y+\Delta/2)^2}{2\sigma^2}}\right]
\end{IEEEeqnarray}
and
\begin{IEEEeqnarray}{lCl}
\IEEEeqnarraymulticol{3}{l}{\frac{\partial}{\partial\Delta} \Delta\left[Q\left(\frac{y-\Delta/2}{\sigma}\right)-Q\left(\frac{y+\Delta/2}{\sigma}\right)\right]}\nonumber\\
\quad &  = & Q\left(\frac{y-\Delta/2}{\sigma}\right)-Q\left(\frac{y+\Delta/2}{\sigma}\right) + \frac{\Delta}{2\sqrt{2\pi\sigma^2}}\left[e^{-\frac{(y-\Delta/2)^2}{2\sigma^2}}+e^{-\frac{(y+\Delta/2)^2}{2\sigma^2}}\right]
\end{IEEEeqnarray}
which both tend to zero as $\Delta\downarrow 0$. We further have
\begin{IEEEeqnarray}{lCl}
\frac{\partial^2}{\partial\Delta^2}\left[e^{-\frac{(y-\Delta/2)^2}{2\sigma^2}}-e^{-\frac{(y+\Delta/2)^2}{2\sigma^2}}\right]^2 & = & 2\left[\frac{y-\Delta/2}{2\sigma^2}e^{-\frac{(y-\Delta/2)^2}{2\sigma^2}}+\frac{y+\Delta/2}{2\sigma^2}e^{-\frac{(y+\Delta/2)^2}{2\sigma^2}}\right]^2 \nonumber\\
\IEEEeqnarraymulticol{3}{l}{ {} + 2\left[e^{-\frac{(y-\Delta/2)^2}{2\sigma^2}}-e^{-\frac{(y+\Delta/2)^2}{2\sigma^2}}\right] \left[\frac{(y-\Delta/2)^2-\Delta\sigma^2}{4\sigma^4}e^{-\frac{(y-\Delta/2)^2}{2\sigma^2}}-\frac{(y+\Delta/2)^2-\Delta\sigma^2}{4\sigma^4}e^{-\frac{(y+\Delta/2)^2}{2\sigma^2}}\right]} \nonumber\\\label{eq:cor_HR_1}
\end{IEEEeqnarray}
and
\begin{IEEEeqnarray}{lCl}
\frac{\partial^2}{\partial\Delta^2} \Delta\left[Q\left(\frac{y-\Delta/2}{\sigma}\right)-Q\left(\frac{y+\Delta/2}{\sigma}\right)\right] & = & \frac{1}{\sqrt{2\pi\sigma^2}} \left[e^{-\frac{(y-\Delta/2)^2}{2\sigma^2}}+e^{-\frac{(y+\Delta/2)^2}{2\sigma^2}}\right] \nonumber\\
\IEEEeqnarraymulticol{3}{r}{{} + \frac{\Delta}{2\sqrt{2\pi\sigma^2}}\left[\frac{y-\Delta/2}{2\sigma^2}e^{-\frac{(y-\Delta/2)^2}{2\sigma^2}}-\frac{y+\Delta/2}{2\sigma^2}e^{-\frac{(y+\Delta/2)^2}{2\sigma^2}}\right]}.\label{eq:cor_HR_2}
\end{IEEEeqnarray}
Noting that, as $\Delta\downarrow 0$, \eqref{eq:cor_HR_1} and \eqref{eq:cor_HR_2} tend to $2 y^2 e^{-\frac{y^2}{\sigma^2}}/\sigma^4$ and $\sqrt{2/(\pi\sigma^2)}e^{-\frac{y^2}{2\sigma^2}}$, respectively, l'H\^opital's rule gives
\begin{equation}
\lim_{\Delta\downarrow 0} \frac{1}{\Delta}\frac{\left[e^{-\frac{(y-\Delta/2)^2}{2\sigma^2}}-e^{-\frac{(y+\Delta/2)^2}{2\sigma^2}}\right]^2}{Q\left(\frac{y-\Delta/2}{\sigma}\right)-Q\left(\frac{y+\Delta/2}{\sigma}\right)} = \frac{2\frac{y^2}{\sigma^4} e^{-\frac{y^2}{\sigma^2}}}{\sqrt{\frac{2}{\pi\sigma^2}}e^{-\frac{y^2}{2\sigma^2}}} = \sqrt{2\pi\sigma^2}\frac{y^2}{\sigma^4}e^{-\frac{y^2}{2\sigma^2}}. \label{eq:cor_HR_3}
\end{equation}
Integrating \eqref{eq:cor_HR_3} from $-\infty$ to $\infty$ yields
\begin{equation}
\frac{1}{4\pi\sigma^2}\int\limits_{-\infty}^{\infty} \lim_{\Delta\downarrow 0} \frac{1}{\Delta}\frac{\left[e^{-\frac{(y-\Delta/2)^2}{2\sigma^2}}-e^{-\frac{(y+\Delta/2)^2}{2\sigma^2}}\right]^2}{Q\left(\frac{y-\Delta/2}{\sigma}\right)-Q\left(\frac{y+\Delta/2}{\sigma}\right)} \d y = \frac{1}{2\sigma^2}\frac{1}{\sqrt{2\pi\sigma^2}}\int_{-\infty}^{\infty} \frac{y^2}{\sigma^2} e^{-\frac{y^2}{2\sigma^2}} \d y = \frac{1}{2\sigma^2}
\end{equation}
where the last step follows by identifying the terms after $1/(2\sigma^2)$ as the variance of a zero-mean, variance-$\sigma^2$, Gaussian random variable divided by $\sigma^2$, which is equal to one. This proves \eqref{eq:cor_HR_ASH}, which in turn proves Part~i) of Corollary~\ref{cor}.

\subsubsection{Part~ii)}
To prove Part~ii), it suffices to show that the integral on the RHS of \eqref{eq:CUE_thm2} is bounded in $\Delta$. To this end, we first note that the integrand in \eqref{eq:CUE_thm2} is symmetric in $y$, so
\begin{equation}
\frac{1}{4\pi\sigma^2}\int\limits_{-\infty}^{\infty} \frac{\left[e^{-\frac{(y-\Delta/2)^2}{2\sigma^2}}-e^{-\frac{(y+\Delta/2)^2}{2\sigma^2}}\right]^2}{Q\left(\frac{y-\Delta/2}{\sigma}\right)-Q\left(\frac{y+\Delta/2}{\sigma}\right)} \d y = \frac{1}{4\pi\sigma^2}\int\limits_{-\infty}^{\infty} \frac{\left[e^{-\frac{(|y|-\Delta/2)^2}{2\sigma^2}}-e^{-\frac{(|y|+\Delta/2)^2}{2\sigma^2}}\right]^2}{Q\left(\frac{|y|-\Delta/2}{\sigma}\right)-Q\left(\frac{|y|+\Delta/2}{\sigma}\right)} \d y.\label{eq:cor_LR_0}
\end{equation}
We next divide the integration region into the two regions
\begin{equation}
\label{eq:cor_LR_1}
\set{Y}_1=\left\{y\in\Reals\colon |y|\leq\vartheta+\frac{\Delta}{2}\right\} \quad \text{and} \quad \set{Y}_2=\left\{y\in\Reals\colon |y|>\vartheta+\frac{\Delta}{2}\right\}
\end{equation}
for a sufficiently large $\vartheta>0$ and analyze the corresponding integrals separately. 

For $y\in\set{Y}_1$, we use the monotonicity of the $Q$-function to lower-bound
\begin{equation}
Q\left(\frac{|y|-\Delta/2}{\sigma}\right)-Q\left(\frac{|y|+\Delta/2}{\sigma}\right)\geq Q\bigl(\vartheta/\sigma\bigr)-Q\bigl(\Delta/(2\sigma)\bigr), \quad y\in\set{Y}_1 \label{eq:cor_LR_lambda}
\end{equation}
where, for $\vartheta<\Delta/2$, the RHS of \eqref{eq:cor_LR_lambda} is strictly positive and tends to $Q(\vartheta/\sigma)$ as $\Delta\to\infty$. Together with the identity $(a+b)^2\leq 2(a^2+b^2)$, this yields
\begin{IEEEeqnarray}{lCl}
\IEEEeqnarraymulticol{3}{l}{\frac{1}{4\pi\sigma^2}\int_{\set{Y}_1} \frac{\left[e^{-\frac{(|y|-\Delta/2)^2}{2\sigma^2}}-e^{-\frac{(|y|+\Delta/2)^2}{2\sigma^2}}\right]^2}{Q\left(\frac{|y|-\Delta/2}{\sigma}\right)-Q\left(\frac{|y|+\Delta/2}{\sigma}\right)} \d y} \nonumber\\
\quad & \leq & \frac{1}{Q\bigl(\vartheta/\sigma\bigr)-Q\bigl(\Delta/(2\sigma)\bigr)}\frac{1}{2\pi\sigma^2}\int_{\set{Y}_1} e^{-\frac{(|y|-\Delta/2)^2}{\sigma^2}} + e^{-\frac{(|y|+\Delta/2)^2}{\sigma^2}} \d y \nonumber\\
& \leq & \frac{1}{Q\bigl(\vartheta/\sigma\bigr)-Q\bigl(\Delta/(2\sigma)\bigr)}\frac{1}{\sqrt{\pi\sigma^2}} \label{eq:cor_LR_Y1}
\end{IEEEeqnarray}
where the last inequality follows by enhancing the integration region from $\set{Y}_1$ to $\Reals$.

We next consider the case where $y\in\set{Y}_2$. By \eqref{eq:Q_bounds}, we have for $|y|>\vartheta+\Delta/2$\begin{IEEEeqnarray}{lCl}
\IEEEeqnarraymulticol{3}{l}{Q\left(\frac{|y|-\Delta/2}{\sigma}\right)-Q\left(\frac{|y|+\Delta/2}{\sigma}\right)} \nonumber\\
\quad & \geq & \frac{1}{\sqrt{2\pi}}\frac{\sigma}{|y|-\Delta/2}\left(1-\frac{\sigma^2}{(|y|-\Delta/2)^2}\right)e^{-\frac{(|y|-\Delta/2)^2}{2\sigma^2}} - \frac{1}{\sqrt{2\pi}}\frac{\sigma}{|y|+\Delta/2}e^{-\frac{(|y|+\Delta/2)^2}{2\sigma^2}} \nonumber\\
& \geq & \frac{1}{\sqrt{2\pi}}\frac{\sigma}{|y|-\Delta/2} e^{-\frac{(|y|-\Delta/2)^2}{2\sigma^2}}\left[1-\frac{\sigma^2}{\vartheta^2}-\frac{|y|-\Delta/2}{|y|+\Delta/2}e^{-|y|\frac{\Delta}{\sigma^2}}\right] \nonumber\\
& \geq & \frac{\mu_{\vartheta}(\Delta)}{\sqrt{2\pi}}\frac{\sigma}{|y|-\Delta/2} e^{-\frac{(|y|-\Delta/2)^2}{2\sigma^2}}, \quad y\in\set{Y}_2 \label{eq:cor_LR_2}
\end{IEEEeqnarray}
where
\begin{equation}
\label{eq:cor_LR_3}
\mu_{\vartheta}(\Delta) \triangleq 1-\frac{\sigma^2}{\vartheta^2}-e^{-(\vartheta+\Delta/2)\frac{\Delta}{\sigma^2}}
\end{equation}
which, for sufficiently large $\vartheta$, is strictly positive and tends to $1-\sigma^2/\vartheta^2$ as $\Delta\to\infty$. The last inequality in \eqref{eq:cor_LR_2} follows because $(x-\Delta/2)/(x+\Delta/2)\leq 1$, $x>0$ and because the function $e^{-|y|\frac{\Delta}{\sigma^2}}$ is monotonically decreasing in $|y|$. We further note that, for $|y|>\vartheta+\Delta/2$,
\begin{equation}
0 \leq e^{-\frac{(|y|-\Delta/2)^2}{2\sigma^2}}-e^{-\frac{(|y|+\Delta/2)^2}{2\sigma^2}} \leq e^{-\frac{(|y|-\Delta/2)^2}{2\sigma^2}}. \label{eq:cor_LR_4}
\end{equation}
By \eqref{eq:cor_LR_2} and \eqref{eq:cor_LR_4},
\begin{IEEEeqnarray}{lCl}
\IEEEeqnarraymulticol{3}{l}{\frac{1}{4\pi\sigma^2}\int_{\set{Y}_2} \frac{\left[e^{-\frac{(|y|-\Delta/2)^2}{2\sigma^2}}-e^{-\frac{(|y|+\Delta/2)^2}{2\sigma^2}}\right]^2}{Q\left(\frac{|y|-\Delta/2}{\sigma}\right)-Q\left(\frac{|y|+\Delta/2}{\sigma}\right)} \d y} \nonumber\\
\quad & \leq & \frac{1}{2\sqrt{2\pi\sigma^2}} \frac{1}{\mu_{\vartheta}(\Delta)} \int_{\set{Y}_2} \frac{|y|-\Delta/2}{\sigma^2}e^{-\frac{(|y|-\Delta/2)^2}{2\sigma^2}}\d y \nonumber\\
& = & \frac{1}{\sqrt{2\pi\sigma^2}} \frac{1}{\mu_{\vartheta}(\Delta)}e^{-\frac{\vartheta^2}{2\sigma^2}}. \label{eq:cor_LR_Y2}
\end{IEEEeqnarray}
Combining \eqref{eq:cor_LR_Y1} and \eqref{eq:cor_LR_Y2} with \eqref{eq:cor_LR_0}, we obtain
\begin{equation}
\frac{1}{4\pi\sigma^2}\int\limits_{-\infty}^{\infty} \frac{\left[e^{-\frac{(y-\Delta/2)^2}{2\sigma^2}}-e^{-\frac{(y+\Delta/2)^2}{2\sigma^2}}\right]^2}{Q\left(\frac{y-\Delta/2}{\sigma}\right)-Q\left(\frac{y+\Delta/2}{\sigma}\right)} \d y \leq \frac{1}{\sqrt{2\pi\sigma^2}}\left(\frac{\sqrt{2}}{Q\bigl(\vartheta/\sigma\bigr)-Q\bigl(\Delta/(2\sigma)\bigr)} + \frac{e^{-\frac{\vartheta^2}{2\sigma^2}}}{\mu_{\vartheta}(\Delta)}\right). \label{eq:cor_LR_final}
\end{equation}
Part~ii) of Corollary~\ref{cor} follows then by noting that, for sufficiently large $\vartheta$, the RHS of \eqref{eq:cor_LR_final} is bounded in $\Delta$.

\section{Conclusion}
\label{sec:conclusion}
We have studied both the capacity and the low-SNR asymptotic capacity of the peak-and-average-power-limited Gaussian channel when its output is quantized using a dithered, infinite-level, uniform quantizer of step size $\Delta$. We have demonstrated that the capacity of the dither-quantized channel converges to the capacity of the unquantized channel in the high-resolution limit ($\Delta\downarrow 0$), and it converges to zero in the low-resolution limit ($\Delta\to\infty$). We have further demonstrated that, when the peak-power constraint is absent, the low-SNR asymptotic capacity of the dither-quantized channel is equal to that of the unquantized channel irrespective of $\Delta$. In contrast, for finite peak-to-average-power ratios, the low-SNR asymptotic capacity of the dither-quantized channel depends critically on $\Delta$: as we show, it converges to the low-SNR asymptotic capacity of the unquantized channel in the high-resolution limit, but it vanishes in the low-resolution limit.

While dithered, infinite-level, uniform quantizers seem impractical due to the infinite number of bits required to describe their outputs, studying their behavior may help us better understand the behavior of quantizers with a small number of levels, provided that both type of quantizers have similar behaviors. Our results suggest that, with respect to channel capacity, this is the case in the high-resolution limit, but it is not the case in the low-resolution limit. For example, the capacity of the 1-bit quantized Gaussian channel (with a symmetric threshold quantizer) is given by \eqref{eq:capacity_1bit}, which differs not only from the capacity of the dither-quantized Gaussian channel in the low-resolution limit for a given $\mathsf{P}$ (which is zero), but it also has a distinct asymptotic behavior as $\mathsf{P}$ tends to zero. 

Since the concatenation of an infinite-level, uniform quantizer and a 1-bit quantizer results again in a 1-bit quantizer, we conclude that the inferior performance at low quantizer resolutions of the dithered, infinite-level, uniform quantizer is due to the dither. In other words, in the low-resolution regime, adding dither is highly detrimental. Nevertheless, sampling the output of the dithered, infinite-level, uniform quantizer above Nyquist rate, as studied in \cite{zamirfeder95}, may perhaps improve the performance in this regime, since such an approach reduces the quantization noise without increasing the quantizer resolution.

\appendix

\section{Quantization Noise}
\label{app:dither_add_noise}
We shall prove \eqref{eq:dither_add_noise} by showing that
\begin{IEEEeqnarray}{lCl}
H(Y_{\Delta}|U_{\Delta}) & = & h(X+Z_{\Delta}) - \log\Delta \label{eq:app_h|u}\\
H\bigl(Y_{\Delta} |U_{\Delta},X) & = & h(Z_{\Delta}) - \log\Delta. \label{eq:app_h|u,x}
\end{IEEEeqnarray}
The proof of \eqref{eq:app_h|u} and \eqref{eq:app_h|u,x} is almost identical to the proof of Theorem~1 in \cite{zamirfeder92}. For the sake of completeness, we repeat it here.

First note that, since $X$ and $N$ are independent and $N$ is Gaussian, it follows by  \cite[Th.~4.10]{durrett05} that the distribution of the random variable $V=X+N$ is absolutely continuous with respect to the Lebesgue measure, so its pdf, which we shall denote by $f_V$, is defined. Furthermore, the pdf of $V+U_{\Delta}=X+Z_{\Delta}$ relates to $f_V$ via \cite[Th.~4.10]{durrett05}
\begin{equation}
f_{X+Z_{\Delta}}(\xi) = \int_{-\infty}^{\infty} f_{V}(u) f_{U_{\Delta}}(\xi-u) \d u = \frac{1}{\Delta} \int_{\xi-\Delta/2}^{\xi+\Delta/2} f_V(u) \d u \label{eq:app1_1}
\end{equation}
where $f_{U_{\Delta}}$ denotes the pdf of $U_{\Delta}$, i.e., $f_{U_{\Delta}}(u) = \frac{1}{\Delta}\I{|u|\leq\Delta/2}$, $u\in\Reals$. (Recall that $U_{\Delta}$ is uniformly distributed over $[-\Delta/2,\Delta/2]$ and $Z_{\Delta}=N+U_{\Delta}$.) Likewise, the conditional probability of $Y_{\Delta}$ given $U_{\Delta}=u$ is equal to
\begin{equation}
P_{Y_{\Delta}|U_{\Delta}}(i |u) = \Prob\bigl(\Delta i\leq V+u < \Delta (i+1)\bigr) = \int_{\Delta i-u}^{\Delta(i+1)-u} f_V(v) \d v
\end{equation}
which together with \eqref{eq:app1_1} yields
\begin{equation}
\label{eq:app1_2}
P_{Y_{\Delta}|U_{\Delta}}(i |u) = \Delta f_{X+Z_{\Delta}}(\Delta i+\Delta/2-u).
\end{equation}
We next use \eqref{eq:app1_2} and Fubini's theorem \cite[(2.6.6), p.~108]{AsDo00} to express the conditions entropy of $Y$ given $U_{\Delta}$ as
\begin{IEEEeqnarray}{lCl}
H(Y_{\Delta}|U_{\Delta}) & = & - \frac{1}{\Delta} \int_{-\Delta/2}^{\Delta/2} \sum_{i=-\infty}^{\infty} P_{Y|U_{\Delta}}(i |u) \log P_{Y|U_{\Delta}}(i |u) \d u \nonumber\\
& = & - \sum_{i=-\infty}^{\infty} \frac{1}{\Delta} \int_{-\Delta/2}^{\Delta/2} P_{Y|U_{\Delta}}(i |u) \log P_{Y|U_{\Delta}}(i |u) \d u \nonumber\\
& = & - \log\Delta - \sum_{i=-\infty}^{\infty}\int_{-\Delta/2}^{\Delta/2} f_{X+Z_{\Delta}}(\Delta i+\Delta/2-u) \log f_{X+Z_{\Delta}}(\Delta i+\Delta/2-u)\d u. \IEEEeqnarraynumspace\label{eq:app1_h|u}
\end{IEEEeqnarray}
By the change of variable $\xi=\Delta i+\Delta/2-u$, it then follows that
\begin{IEEEeqnarray}{lCl}
H(Y_{\Delta}|U_{\Delta}) & = & - \log\Delta - \sum_{i=-\infty}^{\infty} \int_{\Delta i}^{\Delta(i+1)} f_{X+Z_{\Delta}}(\xi) \log f_{X+Z_{\Delta}}(\xi) \d\xi \nonumber\\
& = & -\log\Delta - h(X+Z_{\Delta}). \label{eq:app1_h|U}
\end{IEEEeqnarray}
This proves \eqref{eq:app_h|u}.

The second identity \eqref{eq:app_h|u,x} follows along similar lines. Indeed, we have
\begin{equation}
f_{Z_{\Delta}}(\xi) = \int_{-\infty}^{\infty} f_N(u) f_{U_{\Delta}}(\xi-u)\d u = \frac{1}{\Delta} \int_{\xi-\Delta/2}^{\xi+\Delta/2} f_N(u) \d u \label{eq:app1_3}
\end{equation}
where $f_N$ denotes the pdf of $N$. Furthermore, the conditional probability of $Y_{\Delta}$ given $(U_{\Delta},X)=(u,x)$ is
\begin{equation}
P_{Y_{\Delta}|U_{\Delta},X}(i|u,x) = \Prob\bigl(\Delta i\leq x+N+u< \Delta(i+1)\bigr) = \int_{\Delta i-x-u}^{\Delta(i+1)-x-u} f_N(n) \d n \label{eq:app1_4}
\end{equation}
which together with \eqref{eq:app1_4} yields
\begin{equation}
\label{eq:app1_5}
P_{Y_{\Delta}|U_{\Delta},X}(i|u,x) = \Delta f_{Z_{\Delta}}(\Delta i +\Delta/2 -x-u).
\end{equation}
Analog to \eqref{eq:app1_h|u} and \eqref{eq:app1_h|U}, we obtain from \eqref{eq:app1_5} that for every $x\in\Reals$
\begin{IEEEeqnarray}{lCl}
H(Y_{\Delta}|U_{\Delta},X=x) & = & -\log\Delta - \sum_{i=-\infty}^{\infty} \int_{\Delta i-x}^{\Delta(i+1)-x} f_{Z_{\Delta}}(\xi)\log f_{Z_{\Delta}}(\xi) \d\xi \nonumber\\
& = & - \log\Delta -h(Z_{\Delta}).
\end{IEEEeqnarray}
Averaging over $X$, this yields \eqref{eq:app_h|u,x}.

\section{Appendix to Section~\ref{sec:CUE_thm2}}
\label{app:thm2}
In this appendix, we prove the conditions stated in Section~\ref{sec:CUE_thm2} that require more involved proofs. Specifically, Section~\ref{sub:Cond_D} demonstrates that the Fisher information \eqref{eq:FI} is finite for all $|x|<\eps$, which together with \eqref{eq:thm2_2} proves Condition~D. Section~\ref{sub:Cond_E} proves Condition~E and Section~\ref{sub:Cond_F} proves Condition~F.

Throughout this appendix, we shall use the following notation. We denote the partial derivative of $f_{Y|X}(y|x)$ with respect to $x$ by $f_x'(y|x)\triangleq \frac{\partial}{\partial x} f_{Y|X}(y|x)$. We further omit the subscript of $f_{Y|X}(y|x)$ to keep notation compact. Finally, we define the sets
\begin{equation}
\set{Y}_1\triangleq \{y\in\Reals\colon |y|\leq \vartheta\} \quad \text{and} \quad \set{Y}_2\triangleq \{y\in\Reals\colon |y|>\vartheta\}.
\end{equation}
 for some arbitrary $\vartheta$.
 
\subsection{Condition~D}
\label{sub:Cond_D}
The Fisher information $I(\cdot)$ is given by \eqref{eq:thm2_2}, namely,
\begin{equation}
I(x) = \frac{1}{\Delta} \frac{1}{2\pi\sigma^2} \int\limits_{-\infty}^{\infty} \frac{\left[e^{-\frac{(y-x-\Delta/2)^2}{2\sigma^2}}-e^{-\frac{(y-x+\Delta/2)^2}{2\sigma^2}}\right]^2}{Q\left(\frac{y-x-\Delta/2}{\sigma}\right)-Q\left(\frac{y-x+\Delta/2}{\sigma}\right)} \d y. \label{eq:appD_0}
\end{equation}
To prove that $I(x)$ is finite for all $|x|<\eps$ and $\Delta>0$, we divide the integration region into $\set{Y}_1$ and $\set{Y}_2$, for some sufficiently large $\vartheta>0$, and show that the corresponding integrals are finite for all $|x|<\eps$.

Since the $Q$-function is continuous and $\set{Y}_1$ is a closed and bounded interval, it follows from the extreme value theorem that for every $y\in\set{Y}_1$ and $|x|\leq \eps$
\begin{equation}
Q\left(\frac{y-x-\Delta/2}{\sigma}\right) - Q\left(\frac{y-x+\Delta/2}{\sigma}\right) \geq Q\left(\frac{\xi_0-\Delta/2}{\sigma}\right) - Q\left(\frac{\xi_0+\Delta/2}{\sigma}\right)
\end{equation}
for some $\xi_0\in[-\vartheta-\eps,\vartheta+\eps]$. Together with \eqref{eq:pdf_z}, this yields for every $y\in\set{Y}_1$ and $|x|<\eps$
\begin{IEEEeqnarray}{lCl}
f(y|x) \geq \frac{1}{\Delta}\left[Q\left(\frac{\xi_0-\Delta/2}{\sigma}\right) - Q\left(\frac{\xi_0+\Delta/2}{\sigma}\right)\right] \triangleq \lambda_{\Delta}. \label{eq:appD_1}
\end{IEEEeqnarray}
By the strict monotonicity of $Q(\cdot)$, it further follows that $\lambda_{\Delta}>0$. We thus have
\begin{IEEEeqnarray}{lCl}
\IEEEeqnarraymulticol{3}{l}{\frac{1}{\Delta} \frac{1}{2\pi\sigma^2} \int_{\set{Y}_1} \frac{\left[e^{-\frac{(y-x-\Delta/2)^2}{2\sigma^2}}-e^{-\frac{(y-x+\Delta/2)^2}{2\sigma^2}}\right]^2}{Q\left(\frac{y-x-\Delta/2}{\sigma}\right)-Q\left(\frac{y-x+\Delta/2}{\sigma}\right)} \d y} \nonumber\\
\quad & \leq & \frac{1}{\Delta\lambda_{\Delta}}\frac{1}{2\pi\sigma^2} \int_{\set{Y}_1} \left[e^{-\frac{(y-x-\Delta/2)^2}{2\sigma^2}}-e^{-\frac{(y-x+\Delta/2)^2}{2\sigma^2}}\right]^2 \d y \nonumber\\
& \leq & \frac{1}{\Delta\lambda_{\Delta}}\frac{\vartheta}{2\pi\sigma^2} \label{eq:appD_Y1}
\end{IEEEeqnarray}
where the second inequality follows because
\begin{equation}
\left|\exp\biggl(-\frac{(y-x-\Delta/2)^2}{2\sigma^2}\biggr)-\exp\biggl(-\frac{(y-x+\Delta/2)^2}{2\sigma^2}\biggr)\right|\leq 1.\label{eq:appD_bla}
\end{equation}

We next consider the case where $y\in\set{Y}_2$. To this end, we first note that the pdf $f_{Z_{\Delta}}$ is symmetric in $z$, so it can be written as
\begin{equation}
f_{Z_{\Delta}}(z) = \frac{1}{\Delta} \left[Q\left(\frac{|z|-\Delta/2}{\sigma}\right)-Q\left(\frac{|z|+\Delta/2}{\sigma}\right)\right].
\end{equation}
Using \eqref{eq:Q_bounds}, this can be lower-bounded as
\begin{IEEEeqnarray}{lCl}
f_{Z_{\Delta}}(z) & \geq & \frac{1}{\Delta}\frac{1}{\sqrt{2\pi}}\frac{\sigma}{|z|-\Delta/2}\left(1-\frac{\sigma^2}{(|z|-\Delta/2)^2}\right) e^{-\frac{(|z|-\Delta/2)^2}{2\sigma^2}} - \frac{1}{\Delta}\frac{1}{\sqrt{2\pi}}\frac{\sigma}{|z|+\Delta/2}e^{-\frac{(|z|+\Delta/2)^2}{2\sigma^2}} \nonumber\\
& = & \frac{1}{\Delta}\frac{1}{\sqrt{2\pi}}\frac{\sigma}{|z|-\Delta/2}e^{-\frac{(|z|-\Delta/2)^2}{2\sigma^2}}\biggl[1-\frac{\sigma^2}{(|z|-\Delta/2)^2} - \frac{|z|-\Delta/2}{|z|+\Delta/2}e^{-\frac{|z|\Delta}{\sigma^2}}\biggr]. \label{eq:appD_2}
\end{IEEEeqnarray}
Note that the term inside the square brackets on the RHS of \eqref{eq:appD_2} tends to one as $|z|\to\infty$. Since by the triangle inequality $|y-x|\geq \vartheta-\epsilon$ for $y\in\set{Y}_2$ and $|x|<\eps$, it follows that for any $0<\mu_{\Delta}<1$ there exists a sufficiently large $\vartheta$ such that
\begin{equation}
f(y|x) = f_{Z_{\Delta}}(y-x) \geq \frac{\mu_{\Delta}}{\Delta}\frac{1}{\sqrt{2\pi}}\frac{\sigma}{|y-x|-\Delta/2}e^{-\frac{(|y-x|-\Delta/2)^2}{2\sigma^2}}, \quad y\in\set{Y}_2, \, |x|<\eps. \label{eq:appD_3}
\end{equation}
Applying \eqref{eq:appD_3} to \eqref{eq:appD_0}, and using that the integrand is symmetric in $y$, we obtain
\begin{IEEEeqnarray}{lCl}
\IEEEeqnarraymulticol{3}{L}{\frac{1}{\Delta} \frac{1}{2\pi\sigma^2}\int_{\set{Y}_2} \frac{\left[e^{-\frac{(y-x-\Delta/2)^2}{2\sigma^2}}-e^{-\frac{(y-x+\Delta/2)^2}{2\sigma^2}}\right]^2}{Q\left(\frac{y-x-\Delta/2}{\sigma}\right)-Q\left(\frac{y-x+\Delta/2}{\sigma}\right)} \d y}\nonumber\\
\quad & \leq &\frac{1}{2\pi\sigma^2}\frac{\sqrt{2\pi}}{\mu_{\Delta}\sigma} \int_{\set{Y}_2}  (|y-x|-\Delta/2)e^{\frac{(|y-x|-\Delta/2)^2}{2\sigma^2}}\left[e^{-\frac{(|y-x|-\Delta/2)^2}{2\sigma^2}}-e^{-\frac{(|y-x|+\Delta/2)^2}{2\sigma^2}}\right]^2 \d y. \nonumber\\
& \leq & \frac{1}{2\pi\sigma^2}\frac{\sqrt{2\pi}}{\mu_{\Delta}\sigma} \int_{\set{Y}_2} (|y-x|-\Delta/2)e^{-\frac{(|y-x|-\Delta/2)^2}{2\sigma^2}} \d y \label{eq:appD_4}
\end{IEEEeqnarray}
where the last step follows because, for sufficiently large $\vartheta$, we have $|y-x|\geq\vartheta - \eps>\Delta/2$, which implies that
\begin{equation}
\left|e^{-\frac{(|y-x|-\Delta/2)^2}{2\sigma^2}}-e^{-\frac{(|y-x|+\Delta/2)^2}{2\sigma^2}}\right| \leq e^{-\frac{(|y-x|-\Delta/2)^2}{2\sigma^2}}. \label{eq:appD_4.5}
\end{equation}
Let $z=y-x$ and $\set{Z}_2=\{z\in\Reals\colon |z+x|>\vartheta\}$. By a change of variables, \eqref{eq:appD_4} can be further upper-bounded by
\begin{IEEEeqnarray}{lCl}
\IEEEeqnarraymulticol{3}{l}{\frac{1}{2\pi\sigma^2}\frac{\sqrt{2\pi}}{\mu_{\Delta}\sigma} \int_{\set{Y}_2} (|y-x|-\Delta/2)e^{-\frac{(|y-x|-\Delta/2)^2}{2\sigma^2}} \d y} \nonumber\\
\quad & = & \frac{1}{2\pi\sigma^2}\frac{\sqrt{2\pi}}{\mu_{\Delta}\sigma} \int_{\set{Z}_2} (|z|-\Delta/2)e^{-\frac{(|z|-\Delta/2)^2}{2\sigma^2}} \d z \nonumber\\
& \leq & \frac{1}{2\pi\sigma^2}\frac{\sqrt{2\pi}}{\mu_{\Delta}\sigma} \int_{|z|>\vartheta-\eps}(|z|-\Delta/2)e^{-\frac{(|z|-\Delta/2)^2}{2\sigma^2}} \d z \nonumber\\
& = & \sqrt{\frac{2}{\pi\sigma^2}} \frac{1}{\mu_{\Delta}} e^{-\frac{(\vartheta-\eps-\Delta/2)^2}{2\sigma^2}} \label{eq:appD_Y2}
\end{IEEEeqnarray}
where the inequality follows because, by the triangle inequality, $\set{Z}_2\subseteq \{z\in\Reals\colon |z|>\vartheta-\eps\}$ and because for $z\in\set{Z}_2$ and a sufficiently large $\vartheta$, the term $(|z|-\Delta/2)$ is nonnegative.

Combining \eqref{eq:appD_Y1} and \eqref{eq:appD_Y2}, we obtain for every $|x|<\eps$ and a sufficiently large $\vartheta$
\begin{equation}
I(x) \leq \frac{1}{\Delta\lambda_{\Delta}}\frac{2\vartheta}{\pi\sigma^2} +  \sqrt{\frac{2}{\pi\sigma^2}} \frac{1}{\mu_{\Delta}} e^{-\frac{(\vartheta-\eps-\Delta/2)^2}{2\sigma^2}}.
\end{equation}
Thus, the Fisher information $I(x)$ is finite for all $|x|<\eps$.

\subsection{Condition~E}
\label{sub:Cond_E}
By the chain rule, it follows that
\begin{equation}
\frac{\partial}{\partial x} \sqrt{f(y|x)} = \frac{f_x'(y|x)}{2\sqrt{f(y|x)}}.
\end{equation}
To prove Condition~E, we need to show that for every $\Delta>0$ \cite[Eq.~(2.3)]{prelovvandermeulen93}
\begin{equation}
\label{eq:app:condE_1}
\int_{-\infty}^{\infty} \left[\frac{f_x'(y|x_1)}{2\sqrt{f(y|x_1)}}-\frac{f_x'(y|x_2)}{2\sqrt{f(y|x_2)}}\right]^2 \d y \to 0
\end{equation}
as $x_1\to 0$ and $x_2\to 0$. Since, for all $y\in\Reals$, $x\mapsto f'_x(y|x)$ and $x\mapsto f(y|x)$ are both bounded and continuous functions of $x$ and $\inf_{|x|<\epsilon} f(y|x)>0$ (for some arbitrary $\epsilon>0$), it follows that
\begin{equation}
\label{eq:appE_2}
\lim_{\substack{x_1\to 0, \\x_2\to 0}} \left[\frac{f_x'(y|x_1)}{2\sqrt{f(y|x_1)}}-\frac{f_x'(y|x_2)}{2\sqrt{f(y|x_2)}}\right]^2 = 0, \quad y\in\Reals.
\end{equation}
To prove \eqref{eq:app:condE_1}, it thus suffices to show that there exists an integrable function $y\mapsto g(y)$ that upper-bounds
\begin{equation}
\label{eq:appE_3}
\left[\frac{f_x'(y|x_1)}{2\sqrt{f(y|x_1)}}-\frac{f_x'(y|x_2)}{2\sqrt{f(y|x_2)}}\right]^2 \leq g(y), \quad y\in\Reals
\end{equation}
for all $|x_1|<\eps$ and $|x_2|<\eps$.The claim \eqref{eq:app:condE_1}, and hence Condition~E, follows then by the dominated convergence theorem \cite[(1.6.9), p.~50]{AsDo00}.

To prove \eqref{eq:appE_3}, we follow the approach carried out in Section~\ref{sub:Cond_D} and divide the integration region into $\set{Y}_1$ and $\set{Y}_2$ (for some sufficiently large $\vartheta>0$) and evaluate the corresponding integrals separately. For $y\in\set{Y}_1$, we use the identity $(a+b)^2\leq 2(a^2+b^2)$, \eqref{eq:appD_1}, and \eqref{eq:appD_bla} to upper-bound
\begin{IEEEeqnarray}{lCl}
\IEEEeqnarraymulticol{3}{l}{\left[\frac{f_x'(y|x_1)}{2\sqrt{f(y|x_1)}}-\frac{f_x'(y|x_2)}{2\sqrt{f(y|x_2)}}\right]^2 }\nonumber\\
\quad & \leq & \frac{\bigl(f_x'(y|x_1)\bigr)^2}{2 f(y|x_1)}+\frac{\bigl(f_x'(y|x_2)\bigr)^2}{2 f(y|x_2)} \nonumber\\
& \leq & \frac{1}{2\lambda_{\Delta}}\frac{1}{\Delta}\frac{1}{2\pi\sigma^2} \left(\left[e^{-\frac{(y-x_1-\Delta/2)^2}{2\sigma^2}}-e^{-\frac{(y-x_1+\Delta/2)^2}{2\sigma^2}}\right]^2+\left[e^{-\frac{(y-x_2-\Delta/2)^2}{2\sigma^2}}-e^{-\frac{(y-x_2+\Delta/2)^2}{2\sigma^2}}\right]^2\right) \nonumber\\
& \leq & \frac{1}{\lambda_{\Delta}}\frac{1}{\Delta}\frac{1}{2\pi\sigma^2} \label{eq:appE_Y1}
\end{IEEEeqnarray}
which is integrable over the bounded set $\set{Y}_1$.

We next consider the case where $y\in\set{Y}_2$. We first note that for any $0<\mu_{\Delta}<1$ there exists a sufficiently large $\vartheta$ such that \eqref{eq:appD_3} holds. Using this result together with the identity $(a+b)^2\leq 2(a^2+b^2)$ and \eqref{eq:appD_4.5}, we obtain for sufficiently larger $\vartheta$
\begin{IEEEeqnarray}{lCl}
\IEEEeqnarraymulticol{3}{l}{\left[\frac{f_x'(y|x_1)}{2\sqrt{f(y|x_1)}}-\frac{f_x'(y|x_2)}{2\sqrt{f(y|x_2)}}\right]^2 }\nonumber\\
\quad & \leq & \frac{\bigl(f_x'(y|x_1)\bigr)^2}{2 f(y|x_1)}+\frac{\bigl(f_x'(y|x_2)\bigr)^2}{2 f(y|x_2)} \nonumber\\
& \leq & \frac{1}{4\pi\sigma^2}\frac{\sqrt{2\pi}}{\mu_{\Delta}\sigma} (|y-x_1|-\Delta/2)e^{\frac{(|y-x_1|-\Delta/2)^2}{2\sigma^2}}\left[e^{-\frac{(|y-x_1|-\Delta/2)^2}{2\sigma^2}}-e^{-\frac{(|y-x_1|+\Delta/2)^2}{2\sigma^2}}\right]^2 \nonumber\\
& & {} + \frac{1}{4\pi\sigma^2}\frac{\sqrt{2\pi}}{\mu_{\Delta}\sigma} (|y-x_2|-\Delta/2)e^{\frac{(|y-x_2|-\Delta/2)^2}{2\sigma^2}}\left[e^{-\frac{(|y-x_2|-\Delta/2)^2}{2\sigma^2}}-e^{-\frac{(|y-x_2|+\Delta/2)^2}{2\sigma^2}}\right]^2 \nonumber\\
& \leq &  \frac{1}{4\pi\sigma^2}\frac{\sqrt{2\pi}}{\mu_{\Delta}\sigma}\left( (|y-x_1|-\Delta/2)e^{-\frac{(|y-x_1|-\Delta/2)^2}{2\sigma^2}} +  (|y-x_2|-\Delta/2)e^{-\frac{(|y-x_2|-\Delta/2)^2}{2\sigma^2}}\right). \IEEEeqnarraynumspace\label{eq:appE_107}
\end{IEEEeqnarray}
Since, by the triangle inequality, $\bigl||y|-|x|\bigr|\leq |y-x|\leq |y|+|x|$, it follows that for all $|x_1|<\epsilon$ and $|x_2|<\epsilon$
\begin{equation}
\left[\frac{f_x'(y|x_1)}{2\sqrt{f(y|x_1)}}-\frac{f_x'(y|x_2)}{2\sqrt{f(y|x_2)}}\right]^2 \leq \frac{1}{2\pi\sigma^2}\frac{\sqrt{2\pi}}{\mu_{\Delta}\sigma} (|y|+\eps-\Delta/2)e^{-\frac{(|y|-\eps-\Delta/2)^2}{2\sigma^2}}. \label{eq:appE_Y2}
\end{equation}
Note that the RHS of \eqref{eq:appE_Y2} is integrable over $y\in\set{Y}_2$.

Combining \eqref{eq:appE_Y1} and \eqref{eq:appE_Y2}, it follows that the integrable function
\begin{equation}
g(y) = \begin{cases}  \frac{1}{\lambda_{\Delta}}\frac{1}{\Delta}\frac{1}{2\pi\sigma^2}, & y\in\set{Y}_1 \\  \frac{1}{2\pi\sigma^2}\frac{\sqrt{2\pi}}{\mu_{\Delta}\sigma} (|y|+\eps-\Delta/2)e^{-\frac{(|y|-\eps-\Delta/2)^2}{2\sigma^2}}, & y\in\set{Y}_2\end{cases}
\end{equation}
satisfies \eqref{eq:appE_3} for all $|x_1|<\eps$ and $|x_2|<\eps$. This demonstrates that Condition~E is satisfied.

\subsection{Condition~F}
\label{sub:Cond_F}
We upper-bound the left-hand side of \eqref{eq:Condition_F} by deriving an upper bound on
\begin{equation*}
\int_{\set{B}_{\epsilon,\delta}} \frac{[f'_x(y|x))]^2}{f(y|x)} \d y
\end{equation*}
that holds for sufficiently small $\eps>0$ and that is independent of $x\in\Reals$ and $\delta>0$. To this end, we divide the integration region into $\set{B}_{\epsilon,\delta}\cap \set{Y}_1$ and $\set{B}_{\epsilon,\delta}\cap \set{Y}_2$ and evaluate each integral separately. We then show that the resulting upper bound vanishes as $\vartheta$ tends to infinity, thereby proving Condition~F.

We begin by showing that for any $\delta>0$ and $\vartheta>0$ there exists a sufficiently small $\eps_0$ such that for all $\eps<\eps_0$
\begin{equation}
\label{eq:appF_0}
\set{B}_{\epsilon,\delta}\cap\set{Y}_1 = \varnothing
\end{equation}
where $\varnothing$ denotes the empty set. Consequently,
\begin{equation}
\label{eq:appF_Y1}
\int_{\set{B}_{\epsilon,\delta}\cap\set{Y}_1} \frac{[f'_x(y|x)]^2}{f(y|x)} \d y=0, \quad \eps<\eps_0.
\end{equation}
To this end, we approximate $f(y|x)$ for every $y\in\set{Y}_1$ by a Taylor series around $x=0$:
\begin{equation}
\label{eq:appF_2}
f(y|x) = f(y|0) + x f'_x(y|x_0), \quad y\in\set{Y}_1, \, |x|<\eps
\end{equation}
for some $0\leq x_0\leq x$, where we use the Lagrange form of the remainder. Consequently,
\begin{equation}
\label{eq:appF_3}
\log\frac{f(y|x)}{f(y|0)} = \log\left(1+x\frac{f'_x(y|x_0)}{f(y|0)}\right).
\end{equation}
By \eqref{eq:appD_1}, \eqref{eq:thm2_1}, and \eqref{eq:appD_bla}, it follows that
\begin{equation}
f(y|0) \geq \lambda_{\Delta} \quad \text{and} \quad |f'_x(y|x_0)| \leq \frac{1}{\Delta}\frac{1}{\sqrt{2\pi\sigma^2}} \label{eq:appF_4}
\end{equation}
for some $\lambda_{\Delta}>0$, which implies that
\begin{equation}
\left|\frac{f'_x(y|x_0)}{f(y|0)}\right| \leq \frac{1}{\Delta\lambda_{\Delta}}\frac{1}{\sqrt{2\pi\sigma^2}} \triangleq \kappa_{\Delta}, \quad y\in\set{Y}_1. \label{eq:appF_5}
\end{equation}
Using the inequality
\begin{equation}
|\log(1+x)| \leq \frac{|x|}{1-|x|}, \quad |x|<1 \label{eq:appF_6}
\end{equation}
it follows from \eqref{eq:appF_3}, \eqref{eq:appF_4}, and the monotonicity of $x\mapsto x/(1-x)$f that, for $|x|<\eps<1$,
\begin{IEEEeqnarray}{lCl}
\left|\log\frac{f(y|x)}{f(y|0)}\right| & = & \left|\log\left(1+x\frac{f'_x(y|x_0)}{f(y|0)}\right)\right| \nonumber\\
& \leq & \frac{|x|\left|\frac{f'_x(y|x_0)}{f(y|0)}\right|}{1-|x|\left|\frac{f'_x(y|x_0)}{f(y|0)}\right|} \nonumber\\
& \leq & \frac{\eps\kappa_{\Delta}}{1-\eps\kappa_{\Delta}}. \label{eq:appF_7}
\end{IEEEeqnarray}
The RHS of \eqref{eq:appF_7} vanishes as $\eps\downarrow 0$, so for any fixed $\delta>0$ and $\vartheta>0$, there exists an $\eps_0$ such that
\begin{equation}
\sup_{|x|<\eps} \left|\log\frac{f(y|x)}{f(y|0)}\right| < \delta, \quad \text{for all $\eps<\eps_0$.}
\end{equation}
Together with the definition of $\set{B}_{\epsilon,\delta}$ in \eqref{eq:Beps}, this proves \eqref{eq:appF_0}.

We continue by evaluating the integral for $y\in\set{Y}_2$. To this end, we use that the integrand is nonnegative and that $\set{B}_{\epsilon,\delta}\cap\set{Y}_2\subseteq \set{Y}_2$ to upper-bound
\begin{IEEEeqnarray}{lCl}
\int_{\set{B}_{\epsilon,\delta}\cap\set{Y}_2} \frac{[f'_x(y|x)]^2}{f(y|x)} \d y & \leq & \int_{\set{Y}_2} \frac{[f'_x(y|x)]^2}{f(y|x)} \d y. \label{eq:appF_8}
\end{IEEEeqnarray}
Repeating the steps \eqref{eq:appD_2}--\eqref{eq:appD_Y2} in Section~\ref{sub:Cond_E}, the RHS of \eqref{eq:appF_8} can be further upper-bounded by
\begin{equation}
\int_{\set{Y}_2} \frac{[f'_x(y|x)]^2}{f(y|x)} \d y \leq \sqrt{\frac{2}{\pi\sigma^2}} \frac{1}{\mu_{\Delta}(\vartheta)} e^{-\frac{(\vartheta-\eps-\Delta/2)^2}{2\sigma^2}} \label{eq:appF_Y2}
\end{equation}
where $\vartheta\mapsto\mu_{\Delta}(\vartheta)$ is a positive function that tends to $1$ as $\vartheta\to\infty$.

Combining \eqref{eq:appF_Y2} with \eqref{eq:appF_Y1} and \eqref{eq:appF_8}, we obtain that for every $\delta>0$ and sufficiently large $\vartheta>0$, there exists an $\eps_0$ such that
\begin{equation}
\int_{\set{B}_{\epsilon,\delta}} \frac{[f'_x(y|x))]^2}{f(y|x)} \d y \leq \sqrt{\frac{2}{\pi\sigma^2}} \frac{1}{\mu_{\Delta}(\vartheta)} e^{-\frac{(\vartheta-\eps-\Delta/2)^2}{2\sigma^2}}, \quad \text{for all $\eps<\eps_0$.}
\end{equation}
Upon integrating over $-\eps<x<\eps$, dividing by $\eps$, and taking the limit as $\eps\downarrow 0$, this yields for every $\delta>0$ and sufficiently large $\vartheta>0$
\begin{equation}
\lim_{\eps\downarrow 0} \frac{1}{\eps} \int_{-\eps}^{\eps} \int_{\set{B}_{\epsilon,\delta}} \frac{[f'_x(y|x))]^2}{f(y|x)} \d y\d x \leq \sqrt{\frac{2}{\pi\sigma^2}} \frac{2}{\mu_{\Delta}(\vartheta)} e^{-\frac{(\vartheta-\Delta/2)^2}{2\sigma^2}}.
\end{equation}
Condition~F follows then by letting $\vartheta$ tend to infinity.

\section*{Acknowledgment}
Stimulating discussions with Ram Zamir are gratefully acknowledged.



\end{document}